\documentclass[11pt]{article}
\usepackage{hyperref}
\usepackage{times}  
\usepackage{mathpazo}
\usepackage{amssymb,amsmath,amsthm}
\usepackage{epsfig}

 \setlength{\topmargin}{-0.5in}
 \setlength{\textwidth}{6.5in} 
 \setlength{\textheight}{9.2in}
 \setlength{\evensidemargin}{-.1in}
 \setlength{\oddsidemargin}{-.1in}

\newtheorem{theorem}{Theorem}[section]
\newtheorem{proposition}[theorem]{Proposition}
\newtheorem{definition}[theorem]{Definition}

\newtheorem{lemma}[theorem]{Lemma}

\newtheorem{corollary}[theorem]{Corollary}
\newtheorem{fact}[theorem]{Fact}

\newcommand{\qedsymb}{\hfill{\rule{2mm}{2mm}}}
\renewenvironment{proof}[1][]{\begin{trivlist}
\item[\hspace{\labelsep}{\bf\noindent Proof#1:\/}] }{\qedsymb\end{trivlist}}


\def\Z{{\mathbb{Z}}}
\def\R{\mathbb{R}}

\def\mod{\mbox{mod}}

\newcommand{\NP}{\mathsf{NP}}

\newcommand{\eps}{\epsilon}
\renewcommand{\epsilon}{\varepsilon}

\newcommand{\rank}{\mathop{\mathrm{rank}}}
\newcommand{\minrank}{\mathop{\mathrm{minrk}}}

\newcommand{\Fset}{\mathbb{F}}         


\newcommand{\vchrom}{{\chi_v}}
\newcommand{\svchrom}{{\chi^{(s)}_v}}

\begin{document}

\title{{\bf On Minrank and the Lov\'asz Theta Function}}

\author{
Ishay Haviv\thanks{School of Computer Science, The Academic College of Tel Aviv-Yaffo, Tel Aviv 61083, Israel.
}
}

\date{}

\maketitle

\begin{abstract}
Two classical upper bounds on the Shannon capacity of graphs are the $\vartheta$-function due to Lov\'asz and the minrank parameter due to Haemers.
We provide several explicit constructions of $n$-vertex graphs with a constant $\vartheta$-function and minrank at least $n^\delta$ for a constant $\delta>0$ (over various prime order fields).
This implies a limitation on the $\vartheta$-function-based algorithmic approach to approximating the minrank parameter of graphs.
The proofs involve linear spaces of multivariate polynomials and the method of higher incidence matrices.
\end{abstract}

\section{Introduction}\label{sec:intro}

For a graph $G$ on the vertex set $V$, let $G^k$ denote the graph on the vertex set $V^k$ in which two distinct vertices $(u_1,\ldots,u_k)$ and $(v_1,\ldots,v_k)$ are adjacent if for every $1 \leq i \leq k$ it holds that $u_i$ and $v_i$ are either equal or adjacent in $G$. The {\em Shannon capacity} of $G$, introduced by Shannon in 1956~\cite{Shannon56}, is defined as the limit $c(G) = \lim_{k \rightarrow \infty}{\sqrt[k]{\alpha(G^k)}}$, where $\alpha(G^k)$ stands for the independence number of $G^k$.
The study of the graph parameter $c(G)$ is motivated by an application in information theory, as it measures the effective alphabet size in a communication over a noisy channel represented by $G$. However, computing the Shannon capacity of a graph is a notoriously difficult task. Its exact value is not known even for small graphs, such as the cycle on $7$ vertices, and from a computational perspective, it is not known if the problem of deciding whether the Shannon capacity of a given graph exceeds a given value is decidable.

The difficulty in computing the Shannon capacity of graphs motivates studying upper and lower bounds on $c(G)$. It is known that $c(G)$ is sandwiched between the independence number $\alpha(G)$ of $G$ and its clique cover number $\chi(\overline{G})$. In 1979, Lov{\'a}sz~\cite{Lovasz79} introduced the $\vartheta$-function of graphs defined as follows: For a graph $G$ on the vertex set $V$, $\vartheta(G)$ is the minimum of $\max_{i \in V} \frac{1}{\langle x_i, y\rangle ^2}$, taken over all choices of unit vectors $y$ and $(x_i)_{i \in V}$ such that $x_i$ and $x_j$ are orthogonal whenever $i$ and $j$ are distinct non-adjacent vertices in $G$ (see~\cite{Knuth94} for several equivalent definitions).
It was shown in~\cite{Lovasz79} that $c(G) \leq \vartheta (G)$ for every graph $G$, and this was used to prove that the Shannon capacity of the cycle on $5$ vertices is equal to $\sqrt{5}$.
The $\vartheta$-function of graphs can be computed in polynomial running time at an arbitrary precision using semi-definite programming~\cite{GrotschelLS81} and it has found interesting combinatorial and algorithmic applications over the years (see, e.g.,~\cite{Feige97,AlonK98,OfekF06}).

Another upper bound on the Shannon capacity of graphs is the minrank parameter introduced by Haemers~\cite{Haemers79,Haemers81}.
For a graph $G$ on the vertex set $V = \{1,\ldots,n\}$, the minrank of $G$ over a field $\Fset$, denoted ${\minrank}_\Fset(G)$, is the minimum of ${\rank}_\Fset(M)$ over all matrices $M \in  \Fset^{n \times n}$ satisfying $M_{i,i} \neq 0$ for every $i \in V$, and $M_{i,j} = 0$ whenever $i$ and $j$ are distinct non-adjacent vertices in $G$. For the field $\Fset_p$ of prime order $p$ we use the notation ${\minrank}_p(G)$.
For most graphs the minrank parameter is larger than the $\vartheta$-function~\cite{Coja-Oghlan05,HavivL12,Golovnev0W17}, yet it was shown in~\cite{Haemers79} that there are graphs for which the minrank bound on the Shannon capacity is tighter. In recent years, the minrank parameter has attracted an intensive research motivated by its relations to various topics in information theory, e.g., (linear) index coding~\cite{BirkKol98,BBJK06,LS07,BlasiakKL13}, network coding~\cite{AhlswedeCLY00,ESG08}, distributed storage~\cite{mazumdar2014duality,ji2014caching}, and wireless communication~\cite{maleki2014index,jafar2014topological}, and in theoretical computer science, e.g., Valiant's approach to lower bounds in circuit complexity~\cite{Valiant92,Riis07,Golovnev0W17}, communication complexity~\cite{PudlakRS97}, and randomized computation~\cite{HavivL13}.

The computational problem of deciding whether the minrank of a given graph is at most $3$ is known to be $\NP$-complete over any fixed finite field~\cite{Peeters96}.
Moreover, assuming a certain variant of the unique games conjecture, it is $\NP$-hard to approximate the minrank of a given graph to within any constant~\cite{LangbergS08} (and even to within certain super-constant factors, as follows from~\cite{LangbergS08} combined with~\cite{DinurS10}).
On the algorithmic side, relations between the minrank parameter and the tractable $\vartheta$-function can be beneficial to efficient approximation algorithms for minrank. This approach was taken in~\cite{ChlamtacH14} where it was proved that every graph $G$ with ${\minrank}_2(G)=k$ satisfies $\vartheta(G) \leq 2^{k/2}+1-2^{1-k/2}$. This bound was used to obtain an efficient algorithm that given an $n$-vertex graph $G$ with ${\minrank}_2(G)=k$, where $k$ is a constant, finds a clique cover of $G$ of size $O(n^{\alpha(k)})$ for some $\alpha(k)<1$ (e.g., $\alpha(3) \approx 0.2574$). Note that such a clique cover of $G$ in particular yields a matrix confirming the same bound on the minrank.

The algorithm of~\cite{ChlamtacH14} for minrank employs the semi-definite programming technique used in the algorithm of Karger, Motwani, and Sudan for graph coloring~\cite{KargerMS98}. The analysis of the latter shows that every $n$-vertex graph $G$ with a constant $\vartheta(G)$ has a clique cover of size $O(n^\alpha)$ for a constant $\alpha<1$.
This is known to be tight in the sense that there are $n$-vertex graphs $G$ with a constant $\vartheta(G)$ and yet a clique cover number $n^{\Omega(1)}$~\cite{KargerMS98,Charikar02} (see also~\cite{FeigeLS04}). However, the minrank of a graph might in general be much smaller than its clique cover number (even exponentially; see~\cite{Haemers81}).
It is natural to ask, then, whether a constant $\vartheta(G)$ guarantees a stronger bound of $n^{o(1)}$ on ${\minrank_2}(G)$. In the current work we rule out this possibility in a general sense, as stated below.

\begin{theorem}\label{thm:Intro1}
For every prime $p$ there exist $c = c(p)$ and $\delta = \delta(p) >0$ such that for infinitely many integers $n$ there exists an $n$-vertex graph $G$ such that $\vartheta(G) \leq c$ and $\minrank_p(G) \geq n^\delta$.
\end{theorem}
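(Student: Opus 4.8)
The plan is to construct the graph $G$ from a suitable \emph{linear space of multivariate polynomials}, exploiting the tension between the two parameters: the $\vartheta$-function will be controlled by a clean geometric orthogonality structure coming from a ``nice'' vector representation, while the minrank lower bound will be forced by an algebraic rigidity property of polynomial spaces over $\Fset_p$. Concretely, fix the prime $p$ and an integer parameter $d$ (eventually a small constant depending on $p$), and let the vertex set be (a normalized form of) the set of points of $\Fset_p^m$ for a growing dimension $m$; edges are placed according to whether two points agree on a prescribed algebraic relation (for instance, whether a fixed low-degree polynomial, or a bilinear form, vanishes on the pair). The number of vertices is $n = p^m$, so a bound of the form $\minrank_p(G) \ge p^{\Omega(m)}$ will translate into $\minrank_p(G) \ge n^\delta$ for $\delta = \Omega(1)$.

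First I would establish the \textbf{upper bound on $\vartheta(G)$}. Here the goal is a vector representation: assign to each vertex $i$ a unit vector $x_i$ and exhibit a unit vector $y$ with $\langle x_i, x_j\rangle = 0$ for non-adjacent $i,j$ and $\langle x_i, y\rangle^2 \ge 1/c$ uniformly. The natural candidates are (tensor powers of) the characters of $\Fset_p^m$, or low-degree monomial evaluation vectors; the point is that the adjacency rule was chosen precisely so that non-adjacency corresponds to the orthogonality of these character-type vectors, and the inner product with the all-ones-type vector $y$ is a fixed constant independent of $m$ (a Gauss-sum / Fourier-support computation). This gives $\vartheta(G) \le c(p)$ with $c$ independent of $n$.

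Next I would prove the \textbf{lower bound on $\minrank_p(G)$}. This is where the method of higher incidence matrices enters. Suppose $M$ is a matrix witnessing $\minrank_p(G) = r$: $M_{i,i}\neq 0$ and $M_{i,j}=0$ for distinct non-adjacent $i,j$. The strategy is to pair $M$ against a carefully chosen family of test vectors indexed by polynomials — i.e., show that the rows of $M$, when restricted to the ``forbidden'' pattern, must span a space large enough to separate all the points, using that the space of low-degree polynomials on $\Fset_p^m$ has dimension growing like $m^{\Theta(d)}$ (or $p^{\Theta(m)}$, depending on the exact setup). The key algebraic input is a Schwartz–Zippel / combinatorial-nullstellensatz-type statement: a nonzero low-degree polynomial cannot vanish on too large a structured subset, which forces any low-rank $M$ compatible with the adjacency pattern to be ``wrong'' on the diagonal somewhere, a contradiction unless $r \ge p^{\Omega(m)}$. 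Passing through the higher incidence matrix of points versus polynomial relations lets one convert this non-vanishing into a genuine rank lower bound over $\Fset_p$.

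The \textbf{main obstacle} I anticipate is the minrank lower bound, and within it the fact that rank is being measured over the \emph{finite} field $\Fset_p$ rather than over $\Rset$ or $\Qset$: the usual rigidity arguments for polynomial spaces (dimension counting, linear independence of monomial evaluations) can degenerate over $\Fset_p$ because of Frobenius collapses, so the construction must restrict to degrees below $p$ (or otherwise work in a regime where the evaluation map stays injective), and this is exactly why the constants $c(p)$ and $\delta(p)$ must be allowed to depend on $p$. Reconciling ``degree small enough that $\Fset_p$-rank is robust'' with ``structure rich enough that $\vartheta$ stays bounded'' is the delicate balance, and I expect the proof to choose the degree parameter and the precise adjacency relation so that both the Fourier computation for $\vartheta$ and the nullstellensatz-style argument for $\minrank_p$ go through simultaneously; the remaining steps are then routine bookkeeping to extract the constant $\delta$.
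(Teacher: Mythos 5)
Your high-level intuition (use low-degree polynomial spaces over $\Fset_p$; keep the degree below $p$ so that $\Fset_p$-rank arguments behave; balance this against a geometric vector representation controlling $\vartheta$) points in the right direction, but the route you sketch has a genuine gap exactly where the paper places its main trick.

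The central problem in your proposal is the \emph{direction} of the minrank argument. You want to prove $\minrank_p(G)\ge n^\delta$ \emph{directly}, by arguing that any low-rank matrix $M$ with the prescribed zero/nonzero pattern must ``be wrong on the diagonal somewhere'' via a Schwartz--Zippel / nullstellensatz step and then ``passing through the higher incidence matrix.'' But there is no concrete mechanism here: the polynomial method gives dimension \emph{upper} bounds on spaces of functions (hence upper bounds on ranks of representing matrices you construct), not lower bounds on the rank of an \emph{adversarially chosen} matrix. A Schwartz--Zippel statement about non-vanishing of a low-degree polynomial does not, by itself, constrain the rank of an arbitrary $M$ whose support merely respects the adjacency pattern. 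As written, the step ``low-degree non-vanishing $\Rightarrow$ $\rank_p(M)\ge p^{\Omega(m)}$'' is an unsupported leap. The paper sidesteps this entirely: instead of lower bounding $\minrank_p(G)$ directly, it proves an \emph{upper} bound $\minrank_p(\overline G)\le\binom{d}{q-1}$ on the complement (Proposition~\ref{prop:Kneser=q}, which is exactly a polynomial-method / inclusion-matrix dimension count over $\Fset_p$, using a Lucas-type divisibility fact for $\binom{r-1}{q-1}$), and then invokes the elementary inequality $\minrank_\Fset(G)\cdot\minrank_\Fset(\overline G)\ge n$ (Lemma~\ref{lemma:minrank_comp}) to convert that into a lower bound on $\minrank_p(G)$. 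That reversal -- upper bound the complement, then use the product lemma -- is the missing idea in your proof, and it is what makes the dimension-counting machinery you correctly anticipate actually applicable.

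Secondary issues: the paper's graph is not $\Fset_p^m$ with an algebraic adjacency relation but the complement of a modular generalized Kneser graph $K_q(d,s,t)$ (vertices are $s$-subsets of $[d]$, adjacency by intersection size mod $q$ with $q=p^\ell$), which is where the divisibility fact for $q$ a $p$-power enters and why $\delta$ and $c$ end up depending on $p$. Correspondingly, the $\vartheta$-bound does not go through characters or Gauss sums (which are complex-valued and would need extra work to fit the real vector program defining $\vartheta$) but through the explicit real Karger--Motwani--Sudan shift-of-indicator vectors $(u_A)_i = z$ if $i\in A$ and $-1$ otherwise (Lemma~\ref{lemma:chi_vK}), together with the identity $\vartheta(G)=\svchrom(\overline G)$. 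Your proposal is therefore not a different correct proof of the same theorem; it is an outline whose crucial step (a direct $\Fset_p$-rank lower bound from polynomial non-vanishing) is not established and, in the form stated, would not go through.
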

\noindent
Note that for the special case of $p=2$ we obtain an $n$-vertex graph $G$ with $\vartheta(G) \leq 16$ and $\minrank_2(G) \geq n^{0.1499}$. This implies a limitation on the $\vartheta$-function-based algorithmic approach of~\cite{ChlamtacH14} to minrank over $\Fset_2$. 

We also obtain the following result in which the prime $p$ is not a constant.

\begin{theorem}\label{thm:Intro2}
There exist $c$ and $\delta>0$ such that for infinitely many integers $n$ there exists an $n$-vertex graph $G$ such that $\vartheta(G) \leq c$ and $\minrank_p(G) \geq n^{\delta}$ for some prime $p = \Theta(\log n)$.
\end{theorem}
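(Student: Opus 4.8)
\medskip
\noindent\textbf{Proof proposal.} The plan is to exhibit an explicit algebraically defined graph $G$, bound $\vartheta(G)$ by an absolute constant through an explicit orthonormal representation coming from an association scheme, and bound $\minrank_p(G)$ from below by a power of $n$ through the method of higher incidence matrices applied to a linear space of multivariate polynomials. Concretely, I would take the vertex set of $G$ to be the family of $k$-element subsets of a ground set $[m]$ (equivalently, a family of multilinear monomials, or of $0/1$ evaluation vectors), with $m=\Theta(\log n)$ so that $n=\binom{m}{k}$, and I would take $p$ to be a prime of order $\log n$, chosen so that $p$ exceeds all the intersection sizes $|A\cap B|\in\{0,\dots,k\}$ that occur. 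This last requirement — that the ``intersection arithmetic'' and the relevant binomial coefficients behave modulo $p$ exactly as they do over $\Q$ — is precisely what forces $p$ to grow with $n$; a constant prime (the regime of Theorem~\ref{thm:Intro1}) would only permit a ground set with $k=O(1)$, hence a genuinely different construction. Adjacency in $G$ is governed by a single intersection predicate: $A$ and $B$ are non-adjacent precisely when $|A\cap B|$ lies in a prescribed ``forbidden'' set $\bar L\subseteq\{0,\dots,k-1\}$ (in a variant, when a fixed low-degree polynomial in the characteristic vectors of $A$ and $B$ vanishes). Two things then have to be proved: $\vartheta(G)=O(1)$ and $\minrank_p(G)\ge n^{\delta}$.

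For the bound on $\vartheta$ I would write down an explicit orthonormal representation. Since the vertices carry the structure of a Johnson (or Hamming) scheme, it is natural to seek unit vectors $x_A$ whose inner products $\langle x_A,x_B\rangle$ depend only on $|A\cap B|$, equal $1$ when $A=B$, and vanish whenever $A$ and $B$ are non-adjacent; the existence of such a representation is equivalent to the existence of a positive semidefinite ``dual'' function on the scheme with the prescribed zero pattern, which is a finite list of linear and sign constraints on the Delsarte/eigenvalue coefficients. Taking the common handle $y$ proportional to $\sum_A x_A$ reduces the requirement $\langle x_A,y\rangle^2\ge 1/c$ to the statement that this dual function has mean bounded below by an absolute constant; equivalently, since $G$ is vertex transitive, this is a single spectral estimate asserting that every nontrivial eigenvalue of $G$ is, in absolute value, only a constant fraction of its degree. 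The substance of this step is to choose $\bar L$ and the ratio $k/m$ so that this spectral/positive-semidefiniteness condition holds with absolute constants; this is the more computational half.

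For the lower bound on $\minrank_p$, one natural route is the classical inequality $\minrank_p(G)\cdot\minrank_p(\overline G)\ge n$ — apply $\rank_p(M\circ M')\le\rank_p(M)\cdot\rank_p(M')$ to a realization $M$ of $G$ and a realization $M'$ of $\overline G$, whose Hadamard product is a diagonal matrix with all diagonal entries nonzero — which reduces the task to showing $\minrank_p(\overline G)\le n^{1-\delta}$. This is where the multivariate polynomials and the higher incidence matrices enter: the forbidden-intersection predicate means that the polynomial $Q(A,B)=\prod_{\ell\in\bar L}\bigl(|A\cap B|-\ell\bigr)$, expanded as a squarefree polynomial of degree $s=|\bar L|$ in the $m$ coordinate products of the characteristic vectors of $A$ and $B$, vanishes on exactly the edges of $G$ and is nonzero on the diagonal; hence the matrix with entries $Q(A,B)$ realizes $\overline G$, and its rank over $\Fset_p$ is at most the dimension of the corresponding space of multilinear polynomials of degree $\le s$, which is in turn controlled by the ranks of the inclusion matrices of $t$-subsets versus $k$-subsets. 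Wilson's rank formula modulo $p$ is the tool that certifies that these incidence-matrix ranks are as large as possible once $p$ exceeds the subset sizes — i.e.\ once $p=\Theta(\log n)$ — and a dimension count then yields $\minrank_p(\overline G)\le n^{1-\delta}$ for a suitable $s$, hence $\minrank_p(G)\ge n^{\delta}$. (Alternatively one argues directly: a rank-$r$ realization of $G$ produces an $r$-dimensional space of functions on the $k$-subsets whose zero pattern is incompatible with the Wilson-type incidence ranks unless $r\ge n^{\delta}$.)

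The main obstacle is that these two demands pull the parameters in opposite directions. A small forbidden set $\bar L$ keeps $\minrank_p(\overline G)$, and hence the loss in the exponent, small, but the constant-$\vartheta$ representation needs $\bar L$ (together with $k/m$) rich and well-placed enough that the required positive semidefinite dual function with constant mean exists — with too few forbidden values the relevant nontrivial eigenvalue of $G$ is $\omega(1)$ times its fair share of the degree, while with too many the polynomial-dimension bound becomes vacuous. Locating a regime of the parameters $(m,k,p,\bar L)$ — plausibly with a structured, non-interval $\bar L$, or with a predicate ``bundled'' from several incidence relations at once — in which both the spectral estimate and the $n^{\Omega(1)}$ incidence-rank bound hold simultaneously, while checking that $p=\Theta(\log n)$ is large enough to make the predicate faithful yet small enough to keep $n$ at the claimed size, is the crux; the rest is bookkeeping.
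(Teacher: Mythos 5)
Your overall architecture — take a generalized Kneser graph (complement of $K(d,s,T)$), bound $\vartheta(G)$ by an explicit orthonormal representation tied to the single intersection size, and lower-bound $\minrank_p(G)$ via $\minrank_p(G)\cdot\minrank_p(\overline G)\ge n$ together with a low-degree polynomial realization of $\overline G$ — is the same as the paper's, which proves Theorem~\ref{thm:Intro2} via Theorem~\ref{thm:ThetaMinrankp}, Lemma~\ref{lemma:chi_vK}, and Proposition~\ref{prop:Gp}. But one of your parameter choices is not a bookkeeping detail to be ``located later'': it is the crux, and you have it pointed the wrong way.

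You stipulate that $p$ be chosen ``so that $p$ exceeds all the intersection sizes $|A\cap B|\in\{0,\dots,k\}$ that occur,'' so that binomial coefficients and the intersection arithmetic behave modulo $p$ exactly as over $\Q$. That requirement \emph{destroys} the lower bound. To keep $\vartheta(G)$ constant you need a single allowed intersection value, i.e.\ $\overline G=K(d,s,\{t\})$ so that non-adjacency in $G$ means $|A\cap B|=t$. If $p>s$, then the matrix $M_{A,B}=|A\cap B|-t=\sum_i(c_A)_i(c_B)_i-t$ has $M_{A,A}=s-t\not\equiv 0\pmod p$ and $M_{A,B}=0$ exactly on the non-edges of $G$, so $M$ represents $G$ over $\Fset_p$ and $\rank_p(M)\le d+1=O(\log n)$. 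Thus $\minrank_p(G)=O(\log n)$, and the statement you are trying to prove is simply false in the regime $p>s$. The paper's Proposition~\ref{prop:Gp} does the opposite: it sets $s=2p-1$, $t=p-1$, so $p<s$ and — critically — $s-t=p\equiv 0$, which kills this trivial representation. Because $|A\cap B|\in\{0,\dots,2p-2\}$, the only value congruent to $p-1$ modulo $p$ is $t$ itself, so the polynomial $\prod_{j=0}^{p-2}(\sum_i x_iy_i-j)$ of degree $p-1\approx s/2$ already isolates the residue class of $t$. This halving of the degree from $s-1$ to $p-1$ is exactly what makes $\minrank_p(\overline G)\le\sum_{i\le p-1}\binom{d}{i}$ exponentially smaller than $n=\binom{d}{s}$ and gives a nontrivial exponent after applying Lemma~\ref{lemma:minrank_comp}. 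In your proposed ``faithful-arithmetic'' regime the only available degree is $s-1$, the resulting bound on $\minrank_p(\overline G)$ is $\approx n$, and the complement argument yields nothing.

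There is also a secondary sign slip worth flagging: with your convention that non-adjacency in $G$ is $|A\cap B|\in\bar L$, the polynomial $Q(A,B)=\prod_{\ell\in\bar L}(|A\cap B|-\ell)$ vanishes on the \emph{non-edges} of $G$, so the matrix with entries $Q(A,B)$ realizes $G$, not $\overline G$, giving an upper bound on $\minrank_p(G)$ — the wrong direction. The correct polynomial for $\overline G$ must vanish on the edges of $G$, which is $\prod_{\ell\notin\bar L}(|A\cap B|-\ell)$ of degree $k-|\bar L|$; combined with $|\bar L|=1$ (forced by the constant-$\vartheta$ requirement) this again gives degree $k-1$, which is why the modular trick of Proposition~\ref{prop:Gp}, not a regime where $p$ is large relative to the set sizes, is unavoidable.
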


In our final construction, the bound on the minrank holds over any field of a sufficiently large prime order. However, the bound on the $\vartheta$-function is relaxed to a bound on the vector chromatic number $\chi_v$ of the graph's complement (see Definition~\ref{def:chi_v}).

\begin{theorem}\label{thm:Intro3}
There exists a constant $\delta >0$ such that for infinitely many integers $n$ there exists an $n$-vertex graph $G$ such that $\chi_v(\overline{G}) \leq 3$ and $\minrank_p(G) \geq n^\delta$ for any prime $p \geq \Omega( \log n)$.
\end{theorem}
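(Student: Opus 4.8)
The plan is to realize $G$ as an explicit ``generalized Kneser'' graph on a set system and to play its geometric structure (which controls $\chi_v$) against its arithmetic structure over $\Fset_p$ (which controls $\minrank_p$). I would fix a small constant $0<\gamma<1$, put $N=\lfloor (3-\gamma)k\rfloor$, and let $G$ have as vertices the $k$-subsets of $[N]$, with distinct $A,B$ adjacent iff $|A\cap B|\ge t$ for a threshold $t=\Theta(\gamma k)$. Letting $k\to\infty$ gives the infinite family, with $n=\binom Nk$ and $\log n=\Theta(k)$; the minrank bound will be established for every prime $p$ exceeding a suitable constant multiple of $k$, i.e.\ $p=\Omega(\log n)$, this being exactly the regime in which the algebraic tools below are available. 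A short double-counting shows $\alpha(G)\le 3$, consistent with (in fact required by) the bound $\chi_v(\overline G)\le 3$.

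For $\chi_v(\overline G)\le 3$ I would write down the canonical vector colouring of $\overline G$: send $A$ to the normalized centred indicator $x_A:=(\mathbf 1_A-\tfrac kN\mathbf 1)/\lVert \mathbf 1_A-\tfrac kN\mathbf 1\rVert\in\R^N$. Then $\langle x_A,x_B\rangle=\frac{N|A\cap B|-k^2}{k(N-k)}$, which is $\le-\tfrac12$ exactly when $|A\cap B|\le\frac{k(3k-N)}{2N}=\Theta(\gamma k)$. Choosing the constants so that $t-1$ stays below this threshold, every edge of $\overline G$ (a pair meeting in at most $t-1$ points) satisfies $\langle x_A,x_B\rangle\le-\tfrac12$, which is a vector $3$-colouring; this step is routine once the constants are fixed.

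The heart of the matter is $\minrank_p(G)\ge n^{\delta}$. Let $M$ over $\Fset_p$ fit $G$, so $M_{A,A}\ne 0$ and $M_{A,B}=0$ whenever $|A\cap B|\le t-1$. The target is $\rank_{\Fset_p}(M)\ge\binom Ns=n^{\Omega(1)}$ for some $s=\Theta(\gamma k)$; this is of the right order, since $M_{A,B}=\binom{|A\cap B|}{t}=\sum_{T\subseteq A,\,|T|=t}\mathbf 1[T\subseteq B]$ fits $G$ and has rank exactly $\binom Nt$, an upper bound on $\minrank_p(G)$. The two advertised ingredients enter as follows. By the theory of higher incidence matrices (Gottlieb, Wilson), for $p$ above a fixed multiple of $k$ the level-$s$ inclusion matrix $W_s$ (rows the $s$-subsets, columns the $k$-subsets, entries $\mathbf 1[\,\cdot\subseteq\cdot\,]$) has full $\Fset_p$-rank $\binom Ns$, and the associated space of degree-$\le s$ functions on the $k$-subsets of $[N]$ has dimension $\binom Ns$. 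On the other hand, reading each row $M_{A,\cdot}$ as a function on the $k$-subsets, it vanishes on $\{B:|A\cap B|\le t-1\}$ but not at $B=A$; in the symmetric model $M_{A,B}=f(|A\cap B|)$ this forces the univariate $f$ to vanish at $0,1,\dots,t-1$ but not at $k$, hence $\deg f\ge t$, which through the Johnson-scheme eigenvalues gives $\rank_{\Fset_p}(M)\ge\binom Nt-\binom N{t-1}=n^{\Omega(1)}$. The plan is to transfer this ``degree forcing'' to an arbitrary fit: compress $M$ by the inclusion matrices --- pass to $W_sMW_s^{\top}$, or equivalently to the degree-$s$ component of $M$ in the Johnson scheme (legitimate over $\Fset_p$ for $p$ large) --- and argue that the vanishing pattern keeps this compression nonsingular, whence $\rank_{\Fset_p}(M)\ge\binom Ns$.

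I expect this transfer to be the main obstacle. The only general combinatorial lower bound, $\minrank_p(G)\ge\alpha(G)$, is useless here ($\alpha(G)\le 3$, precisely the obstruction that let us keep $\chi_v(\overline G)$ at $3$), and a generic matrix supported on $E(G)$ has full rank, so the bound has to come entirely from the polynomial/incidence-matrix structure; the delicate point is that an adversary may choose a very non-symmetric $M$ in order to spread out its rank, and a naive symmetrization of $M$ does not preserve rank, which is exactly why the argument must go through the higher inclusion matrices. Tracking the constants through this step is also what pins down the value of $\delta$ and the implied constant in $p=\Omega(\log n)$.
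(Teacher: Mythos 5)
Your setup (generalized Kneser graph with one-sided intersection threshold, centered-indicator vectors for $\chi_v(\overline G)\leq 3$) matches the paper, but your approach to the minrank lower bound is both different and, as you yourself sense, incomplete in a way that is not easily repaired. The paper never tries to show directly that every matrix fitting $G$ has large rank. Instead it uses the complementation inequality $\minrank_p(G)\cdot\minrank_p(\overline G)\geq n$ (Lemma~\ref{lemma:minrank_comp}) and proves an \emph{upper} bound on $\minrank_p(\overline G)$: here $\overline G$ is exactly the Kneser graph $K(d,s,T)$ with $T=\{0,\dots,t\}$, and the inclusion-matrix machinery (Proposition~\ref{prop:minrank_Kneser}) is used to \emph{construct} an explicit matrix of rank at most $\binom{d}{s-t}$ that represents $K(d,s,T)$ over $\Fset_p$ whenever $p>s$. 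Upper-bounding a minrank is the easy direction --- you only have to exhibit one low-rank fitting matrix --- and this is precisely what the polynomial and inclusion-matrix tools are good at; combined with $n=\binom{d}{s}$ this immediately gives $\minrank_p(G)\geq n/\binom{d}{s-t}\geq n^{\Omega(1)}$.

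By contrast, your plan is to take an \emph{arbitrary} fitting matrix $M$ for $G$ and show that the compressed matrix $W_s M W_s^{\top}$ is nonsingular, hence $\rank_{\Fset_p}(M)\geq\binom{N}{s}$. The step you flag as the main obstacle --- ``argue that the vanishing pattern keeps this compression nonsingular'' --- is not merely technical; it is circular as stated. If $\minrank_p(G)$ were smaller than $\binom{N}{s}$, then some fitting $M$ would have rank below $\binom{N}{s}$ and $W_s M W_s^{\top}$ would automatically be singular, so the nonsingularity of the compression is equivalent to the conclusion you want and cannot be asserted from the vanishing pattern alone. Your degree-forcing argument genuinely works only in the symmetric model $M_{A,B}=f(|A\cap B|)$, i.e., for matrices in the Bose--Mesner algebra of the Johnson scheme, and there is no rank-preserving symmetrization. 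So the proposal is missing the key idea of passing to the complement; without Lemma~\ref{lemma:minrank_comp} (or an equivalent duality), the inclusion-matrix tools you invoke only give upper bounds on minrank, not lower bounds. (Also, the claim $\alpha(G)\leq 3$ is a red herring here --- the paper neither needs it nor uses it --- and your parameterization $N=(3-\gamma)k$ differs from the paper's $d=2s$, $t=s/4$, though that is a cosmetic difference.)
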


All the aforementioned constructions are explicit and belong to the family of generalized Kneser graphs (see Definition~\ref{def:Kneser}).
Our technical contribution lies in presenting two general methods for proving bounds on the minrank parameter, employing the tools of linear spaces of multivariate polynomials and higher incidence matrices (see, e.g.,~\cite[Chapters~5~and~7]{BabaiF92}).
We demonstrate the usefulness of these tools in studying the minrank of additional graph families (see Sections~\ref{subsec:ortho} and~\ref{subsec:directed}) and expect our techniques to have further applications in the future.

\subsection{Techniques and Related Work}

As mentioned above, the constructions given in Theorems~\ref{thm:Intro1},~\ref{thm:Intro2}, and~\ref{thm:Intro3} are all from the family of generalized Kneser graphs. In these graphs the vertices are all subsets of a given size of some universe, and two distinct vertices are adjacent if their intersection size lies in a certain specified set of sizes. We are particularly interested in those graphs with only one intersection size in the specified set, as the $\vartheta$-function of their complement is easily bounded (see Lemma~\ref{lemma:chi_vK}).

The independence numbers of generalized Kneser graphs correspond to well-studied combinatorial questions on the size of uniform set systems with forbidden intersection sizes (see, e.g.,~\cite{FranklR87}).
Tools from linear algebra are often used in proving upper bounds in such scenarios. This includes the celebrated works of Ray-Chaudhuri and Wilson~\cite{Ray75} and Frankl and Wilson~\cite{FranklW81}, who obtained their bounds using the method of higher incidence matrices (more specifically, inclusion matrices; see~\cite[Chapter~7]{BabaiF92}). Alon, Babai, and Suzuki~\cite{AlonBS91} provided alternative proofs and generalizations using a different approach operating on linear spaces of multivariate polynomials (see~\cite[Chapter~5]{BabaiF92}). These results have found numerous applications in combinatorics and in theoretical computer science, e.g., explicit constructions in Euclidean Ramsey theory~\cite{AlonP91}, counterexamples to Borsuk's conjecture~\cite{KahnK93}, and integrality gap constructions for approximating graph parameters such as the chromatic number~\cite{KargerMS98}, the independence number~\cite{Feige97,AlonK98}, and the vertex cover number~\cite{Charikar02}.

While the above results provide strong upper bounds on the independence numbers of certain generalized Kneser graphs, they do not imply any meaningful bounds on their minrank. Nevertheless, we show in this work that both the tools of higher incidence matrices and multivariate polynomials can be used to obtain upper bounds on the minrank parameter as well.
We demonstrate these techniques and apply them to several graph families (most, but not all, of which are of the Kneser type).
To obtain the lower bounds on the minrank in Theorems~\ref{thm:Intro1},~\ref{thm:Intro2}, and~\ref{thm:Intro3}, we apply a known relation between the minrank of a graph and the minrank of its complement (see Lemma~\ref{lemma:minrank_comp}).

We note that Alon used in~\cite{AlonUnion98} multivariate polynomials to obtain an upper bound, closely related to minrank, on the Shannon capacity of graphs.
In addition, Lubetzky and Stav used inclusion matrices in~\cite{LS07} to prove that for every prime $p$, an $n$-vertex graph can have a multiplicative gap of $n^{0.5-o(1)}$, in either direction, between the $\vartheta$-function and the minrank over $\Fset_p$. (Note that the bound on $\vartheta$ in these constructions is of $\sqrt{n}$.) It will be interesting to figure out if our construction in Theorem~\ref{thm:Intro1} combined with the randomized graph product technique of~\cite{BermanS92,Feige97} can be used to improve on this multiplicative gap.

\paragraph{Outline.}
In Section~\ref{sec:preliminaries} we gather a few needed definitions and lemmas.
In Sections~\ref{sec:poly} and~\ref{sec:incidence} we prove upper bounds on the minrank of several graph families using, respectively, linear spaces of multivariate polynomials and inclusion matrices. Finally, in Section~\ref{sec:separation}, we prove Theorems~\ref{thm:Intro1},~\ref{thm:Intro2}, and~\ref{thm:Intro3}.

\section{Preliminaries}\label{sec:preliminaries}

Unless otherwise specified, a graph will refer to a simple undirected graph.
We use the notation $[d] = \{1,2,\ldots,d\}$.

\subsection{Minrank}

The minrank of a graph over a field $\Fset$ is defined as follows.
\begin{definition}\label{def:minrank}
Let $G=(V,E)$ be a directed graph on the vertex set $V = \{1,\ldots,n\}$ and let $\Fset$ be a field.
We say that an $n$ by $n$ matrix $M$ over $\Fset$ {\em represents} $G$ if $M_{i,i} \neq 0$ for every $i \in V$, and $M_{i,j}=0$ for every distinct $i,j \in V$ such that $(i,j) \notin E$.
The {\em minrank} of $G$ over $\Fset$ is defined as
\[{\minrank}_\Fset(G) =  \min\{{\rank}_{\Fset}(M)\mid M \mbox{ represents }G\mbox{ over }\Fset\}.\]
\end{definition}

The above definition is naturally extended to undirected graphs by replacing every undirected edge with two oppositely directed edges.
Note that for a prime $p$ we write ${\rank}_{p}(M) = {\rank}_{\Fset_p}(M)$ and ${\minrank}_{p}(G) = {\minrank}_{\Fset_p}(G)$.

We need the following lemma that relates the minrank of a graph to the minrank of its complement. For a proof see, e.g.,~\cite[Remark~2.2]{Peeters96},~\cite[Claim~2.5]{LS07}.

\begin{lemma}\label{lemma:minrank_comp}
For every field $\Fset$ and an $n$-vertex graph $G$, $\minrank_\Fset(G) \cdot \minrank_\Fset(\overline{G}) \geq n$.
\end{lemma}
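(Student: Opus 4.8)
The plan is to exhibit, for any field $\Fset$ and any $n$-vertex graph $G$, a pair of representing matrices whose ranks multiply to at least $n$. Let $A$ be a matrix representing $G$ over $\Fset$ with $\rank_\Fset(A) = \minrank_\Fset(G)$, and let $B$ be a matrix representing $\overline{G}$ over $\Fset$ with $\rank_\Fset(B) = \minrank_\Fset(\overline{G})$. First I would observe the crucial structural fact: because $G$ and $\overline{G}$ have complementary (non-)edge sets on the off-diagonal, the Hadamard (entrywise) product $C := A \circ B$ satisfies $C_{i,j} = 0$ for every $i \neq j$ (each off-diagonal pair is a non-edge in at least one of $G$, $\overline{G}$, forcing a zero factor), while $C_{i,i} = A_{i,i} B_{i,i} \neq 0$ for every $i$. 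Hence $C$ is a diagonal matrix with nonzero diagonal, so $\rank_\Fset(C) = n$.

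The second step is to bound the rank of a Hadamard product by the product of the ranks of its factors: $\rank_\Fset(A \circ B) \leq \rank_\Fset(A) \cdot \rank_\Fset(B)$. This is a standard linear-algebra fact, provable by writing $A = \sum_{k=1}^{r} u_k v_k^{\top}$ and $B = \sum_{\ell=1}^{s} w_\ell z_\ell^{\top}$ as sums of $r = \rank_\Fset(A)$ and $s = \rank_\Fset(B)$ rank-one matrices, and noting that $A \circ B = \sum_{k,\ell} (u_k \circ w_\ell)(v_k \circ z_\ell)^{\top}$ is a sum of $rs$ rank-one matrices. Combining the two steps gives
\[
n = \rank_\Fset(C) = \rank_\Fset(A \circ B) \leq \rank_\Fset(A) \cdot \rank_\Fset(B) = \minrank_\Fset(G) \cdot \minrank_\Fset(\overline{G}),
\]
which is exactly the claimed inequality.

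I do not anticipate a serious obstacle here; the argument is short and self-contained. The only point requiring a little care is the verification that the Hadamard product kills all off-diagonal entries, which relies on the convention (recorded after Definition~\ref{def:minrank}) that an undirected non-edge $\{i,j\}$ corresponds to the absence of both directed arcs, so that a representing matrix for $G$ has $A_{i,j} = A_{j,i} = 0$ there; symmetrically for $\overline{G}$. One should also note that the argument uses nothing about $A$ and $B$ being symmetric, so it applies verbatim in the directed setting as well. With these conventions in place, the two lemmas cited in the excerpt give the same bound, and the proof above reproduces it.
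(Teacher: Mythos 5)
Your proof is correct, and it is the standard argument used in the references the paper cites (Peeters, Remark~2.2, and Lubetzky--Stav, Claim~2.5): take representing matrices $A$ for $G$ and $B$ for $\overline{G}$, observe that the Hadamard product $A \circ B$ is a diagonal matrix with nonzero diagonal and hence has rank $n$, and then apply the rank submultiplicativity of the Hadamard product, $\rank(A \circ B) \leq \rank(A)\,\rank(B)$, which holds over any field via the rank-one decomposition you give. The paper itself simply cites these references rather than reproducing the proof, so your proposal fills in exactly what those citations contain.
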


\subsection{Vector Chromatic Number}

Consider the following two relaxations of the chromatic number of a graph, due to Karger, Motwani, and Sudan~\cite{KargerMS98}.

\begin{definition}\label{def:chi_v}
For a graph $G=(V,E)$ the {\em vector chromatic number} of $G$,
denoted $\vchrom(G)$, is the minimal real value of $\kappa > 1$ such
that there exists an assignment of a unit vector $w_i$ to each
vertex $i \in V$ satisfying the inequality $\langle w_i, w_j \rangle \leq
-\frac{1}{\kappa -1}$ whenever $i$ and $j$ are adjacent in $G$.
\end{definition}

\begin{definition}
For a graph $G=(V,E)$ the {\em strict vector chromatic number} of
$G$, denoted $\svchrom(G)$, is the minimal real value of $\kappa > 1$
such that there exists an assignment of a unit vector $w_i$ to each
vertex $i \in V$ satisfying the equality $\langle w_i, w_j \rangle =
-\frac{1}{\kappa -1}$ whenever $i$ and $j$ are adjacent in $G$.
\end{definition}
\noindent
It is well known and easy to verify that for every graph $G$, $\vchrom(G) \leq \svchrom(G) \leq \chi(G)$.
The Lov\'{a}sz $\vartheta$-function, introduced in~\cite{Lovasz79}, is known to satisfy $\vartheta(G)=\svchrom(\overline{G})$ for every graph
$G$~\cite{KargerMS98} (see Section~\ref{sec:intro} for its original definition).

\subsection{Generalized Kneser Graphs}

Consider the family of generalized Kneser graphs defined below. In these graphs the vertices are subsets of some universe and the existence of an edge connecting two sets is decided according to their intersection size.

\begin{definition}\label{def:Kneser}
For integers $s \leq d$ and a set $T \subseteq \{0,1,\ldots,s-1\}$, the graph $K(d,s,T)$ is defined as follows: the vertices are all possible $s$-subsets of a universe $[d]$ (i.e., subsets of $[d]$ of size $s$), and two distinct sets $A,B$ are adjacent if $|A \cap B| \in T$.
\end{definition}

The following lemma provides a bound on the strict and non-strict vector chromatic numbers of certain generalized Kneser graphs.
Its proof can be found in~\cite[Section~9]{KargerMS98}, and we include it here for completeness.

\begin{lemma}[\cite{KargerMS98}]\label{lemma:chi_vK}
Let $t< s < d$ be integers satisfying $s^2 > dt$.
\begin{enumerate}
  \item\label{itm:chi_v} If $T = \{0,1,\ldots,t\}$ then $\chi_v(K(d,s,T)) \leq \frac{d(s-t)}{s^2-dt}$.
  \item\label{itm:chi_v_s} If $T = \{t\}$ then $\chi_v^{(s)}(K(d,s,T)) \leq \frac{d(s-t)}{s^2-dt}$.
\end{enumerate}
\end{lemma}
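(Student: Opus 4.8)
The plan is to realize every vertex of $K(d,s,T)$ as a unit vector obtained by centering and normalizing its characteristic vector in $\Rset^d$. For an $s$-subset $A \subseteq [d]$, let $\chi_A \in \{0,1\}^d$ denote its characteristic vector, let $\mathbf{1}$ denote the all-ones vector, and set $v_A = \chi_A - \tfrac{s}{d}\mathbf{1}$. Expanding the inner product and using $\langle \chi_A,\mathbf{1}\rangle = s$ and $\langle \mathbf{1},\mathbf{1}\rangle = d$ gives the identity $\langle v_A, v_B\rangle = |A \cap B| - \tfrac{s^2}{d}$ for all $s$-subsets $A,B$ (in particular $\|v_A\|^2 = \tfrac{s(d-s)}{d}$, which is positive since $s<d$). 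Defining $w_A = v_A/\|v_A\|$, each $w_A$ is a unit vector and
\[
\langle w_A, w_B\rangle = \frac{d\,|A\cap B| - s^2}{s(d-s)}.
\]

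Next I would compute the target value. Writing $\kappa = \frac{d(s-t)}{s^2 - dt}$, a one-line manipulation yields $\kappa - 1 = \frac{s(d-s)}{s^2 - dt}$, and since $d>s$ and $s^2>dt$ both numerator and denominator are positive, so $\kappa > 1$ (hence $\kappa$ is an admissible value in the definitions of $\chi_v$ and $\chi_v^{(s)}$). Consequently $-\frac{1}{\kappa-1} = \frac{dt - s^2}{s(d-s)}$, which is exactly the displayed inner product evaluated at $|A\cap B| = t$.

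With these two computations in hand, both parts follow immediately. For part~\ref{itm:chi_v_s}, if $T=\{t\}$ and $A,B$ are adjacent then $|A\cap B| = t$, so $\langle w_A,w_B\rangle = \frac{dt-s^2}{s(d-s)} = -\frac{1}{\kappa-1}$ holds with equality; thus the assignment $(w_A)$ witnesses $\chi_v^{(s)}(K(d,s,\{t\})) \le \kappa$. For part~\ref{itm:chi_v}, if $T=\{0,1,\ldots,t\}$ and $A,B$ are adjacent then $|A\cap B| \le t$, and since the right-hand side of the display is increasing in $|A\cap B|$ (the denominator $s(d-s)$ being positive), we get $\langle w_A,w_B\rangle \le \frac{dt-s^2}{s(d-s)} = -\frac{1}{\kappa-1}$; thus $(w_A)$ witnesses $\chi_v(K(d,s,T)) \le \kappa$.

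There is no genuinely hard step here; the proof is a short calculation. The only points deserving care are: verifying that the candidate $\kappa$ exceeds $1$ so that it is legal in the definitions, observing that the hypothesis $s^2 > dt$ is precisely what forces the relevant inner products to be negative (as both definitions require a negative value on edges), and noting that for the non-strict bound in part~\ref{itm:chi_v} one only needs monotonicity of $\langle w_A,w_B\rangle$ in the intersection size rather than an exact value, which is why the whole interval $\{0,1,\ldots,t\}$ is allowed. Everything else reduces to the two inner-product identities recorded above.
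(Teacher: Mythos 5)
Your proof is correct and constructs the same unit vectors as the paper: your $v_A = \chi_A - \tfrac{s}{d}\mathbf{1}$ is a scalar multiple of the paper's $u_A$ evaluated at its optimal parameter $z = \tfrac{d}{s}-1$, so after normalization the witnesses coincide. The only difference is presentational --- you bypass the paper's optimization over the free parameter $z$ by writing down the centered characteristic vectors directly, which yields the cleaner closed-form inner product $\langle w_A, w_B\rangle = \frac{d\,|A\cap B| - s^2}{s(d-s)}$ and makes both the exactness in part~\ref{itm:chi_v_s} and the monotonicity needed in part~\ref{itm:chi_v} transparent.
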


\begin{proof}
Associate every vertex $A$ of $K(d,s,T)$, representing an $s$-subset of $[d]$, with the vector $u_A \in \R^d$ defined by \[(u_A)_i = z \mbox{ ~if~ } i \in A \mbox{ ~and~ } (u_A)_i = - 1 \mbox{ ~if~ } i \notin A, \mbox{ ~for every~ } i \in [d],\]
where $z$ is a positive real number to be determined. Notice that $\|u_A\|^2 = s \cdot z^2 +d-s$.
Denote by $w_A \in \R^d$ the unit vector defined by $w_A = u_A / \|u_A\|$.

We start with Item~\ref{itm:chi_v}. Let $T = \{0,1,\ldots,t\}$.
Every two adjacent vertices $A$ and $B$ in $K(d,s,T)$ satisfy $|A \cap B| \leq t$, hence $|A \bigtriangleup B| \geq 2(s-t)$ and $|\overline{A \cup B}| \leq d-2s+t$. It follows that
\begin{eqnarray*}
\langle w_A, w_B \rangle &=& \frac{1}{s \cdot z^2 +d-s} \cdot \langle u_A, u_B \rangle \\
& = & \frac{1}{s \cdot z^2 +d-s} \cdot \Big ( |A \cap B| \cdot z^2 -|A \bigtriangleup B| \cdot z + |\overline{A \cup B}| \Big ) \\
& \leq & \frac{t \cdot z^2 -2(s-t) \cdot z + d-2s+t}{s \cdot z^2 +d-s}.
\end{eqnarray*}
A straightforward calculation shows that the minimum of the above expression is attained at $z = \frac{d}{s}-1$ and is equal to $-\frac{1}{\kappa-1}$ for $\kappa = \frac{d(s-t)}{s^2-dt}>1$. This completes the proof of Item~\ref{itm:chi_v}.
The proof of Item~\ref{itm:chi_v_s} is essentially identical. For adjacent vertices $A$ and $B$ in $K(d,s,T)$ where $T = \{t\}$ we have $|A \cap B| = t$, hence the above upper bound on $\langle w_A, w_B \rangle$ is tight, as needed for the strict vector chromatic number.
\end{proof}

\subsection{Linear Algebra Fact}

\begin{fact}\label{fact:rankp_R}
Let $p$ be a prime and let $M$ be an integer matrix. Then, the matrix $M' = M~(\mod~p)$ satisfies ${\rank}_p(M') \leq {\rank}_\R(M)$.
\end{fact}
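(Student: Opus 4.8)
The plan is to use the characterization of matrix rank over a field in terms of the vanishing of minors, together with the observation that taking minors commutes with reduction modulo $p$. Concretely, let $r = \rank_\R(M)$. A classical fact is that for any matrix over a field, the rank is at most $r$ if and only if every $(r+1) \times (r+1)$ submatrix has determinant zero; for the desired upper bound we only need this direction.

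First I would fix an arbitrary $(r+1) \times (r+1)$ submatrix $N$ of $M$. Since $\rank_\R(M) = r$, we have $\det(N) = 0$ as an integer. The corresponding submatrix $N'$ of $M' = M~(\mod~p)$ is precisely $N$ with its entries reduced modulo $p$, and because the determinant is a polynomial with integer coefficients in the matrix entries, $\det(N') \equiv \det(N) \equiv 0 \pmod{p}$. Thus every $(r+1) \times (r+1)$ minor of $M'$ vanishes over $\Fset_p$, which by the same rank-via-minors characterization applied now over the field $\Fset_p$ yields $\rank_p(M') \leq r = \rank_\R(M)$.

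There is essentially no serious obstacle here; the one point worth flagging is that one should resist arguing by lifting a full-rank $r \times r$ submatrix of $M$, since a nonzero integer determinant may become divisible by $p$, so the rank can genuinely drop (which the stated inequality permits). What makes the argument robust is that the condition ``all $(r+1) \times (r+1)$ minors vanish'' is preserved under the ring homomorphism $\Z \to \Fset_p$. As an alternative route one could invoke the Smith normal form: write $M = U D V$ with $U, V$ integer matrices of determinant $\pm 1$ and $D$ diagonal with exactly $r$ nonzero entries; reducing modulo $p$ keeps $U$ and $V$ invertible over $\Fset_p$, so $\rank_p(M') = \rank_p(D~(\mod~p)) \leq r$.
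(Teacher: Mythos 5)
Your proof is correct, and it takes a genuinely different route from the paper's. The paper argues at the level of row vectors: it takes any rows of $M$ that are linearly dependent over $\R$, clears denominators to get an integer dependence relation with $\gcd$ of the coefficients equal to $1$, and observes that such a relation cannot become trivial modulo $p$; hence linear dependence is preserved and the rank cannot increase. You instead use the minors characterization of rank: since every $(r+1)\times(r+1)$ minor of $M$ is the integer $0$ and the determinant is a polynomial with integer coefficients, the vanishing of all such minors is preserved under the ring homomorphism $\Z\to\Fset_p$, which immediately bounds $\rank_p(M')$ by $r$. Both arguments are short and elementary; the paper's requires the small $\gcd$-normalization trick to avoid trivial dependence relations, whereas yours sidesteps that entirely by phrasing the rank condition as the vanishing of a family of integer polynomials, which is manifestly stable under reduction mod $p$. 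Your Smith normal form alternative is also valid and perhaps the most conceptual: it explains exactly how much the rank can drop (by the number of diagonal entries divisible by $p$). Your cautionary remark about not lifting a full-rank $r\times r$ submatrix is apt and correctly identifies why only the one-sided inequality can be expected.
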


\begin{proof}
It suffices to show that if some rows $v_1,\ldots, v_k$ of $M$ are linearly dependent over $\R$ then, considered modulo $p$, they are also linearly dependent over $\Fset_p$.
To see this, assume that there exist $a_1,\ldots,a_k \in \R$, at least one of which is nonzero, for which $\sum_{i=1}^{k}{a_i v_i}=0$. Since the $v_i$'s are integer vectors it can be assumed that $a_1,\ldots,a_k \in \Z$ and that $\gcd(a_1,\ldots,a_k)=1$. This implies that they are not all zeros modulo $p$. Therefore, the same coefficients, considered modulo $p$, provide a non-trivial combination of the corresponding rows of $M'$ with sum zero, and we are done.
\end{proof}

\section{Upper Bounds on Minrank via Multivariate Polynomials}\label{sec:poly}

In this section we prove upper bounds on the minrank parameter of graphs using linear spaces of multivariate polynomials.
We first introduce the notion of functional bi-representations of graphs.
\begin{definition}\label{def:functional}
Let $G=(V,E)$ be a directed graph and let $\Fset$ be a field.
A {\em functional bi-representation} of $G$ over $\Fset$ of dimension $R$ is an assignment of two functions $g_i, h_i : V \rightarrow \Fset$ to each $i \in [R]$ such that the function $f: V \times V \rightarrow \Fset$ defined by
\begin{eqnarray}\label{eq:f}
f(u,v) = \sum_{i=1}^{R}{g_i(u) h_i(v)}
\end{eqnarray}
satisfies
\begin{enumerate}
  \item $f(v,v) \neq 0$ for every $v \in V$, and
  \item $f(u,v) = 0$ for every distinct $u,v \in V$ such that $(u,v) \notin E$.
\end{enumerate}
Note that the definition is naturally extended to undirected graphs.
\end{definition}

Functional bi-representations can be used to provide an alternative definition for the minrank parameter, as stated below.

\begin{lemma}\label{lemma:minrank_def}
For every (directed) graph $G$ and a field $\Fset$, ${\minrank}_\Fset(G)$ is the smallest integer $R$ for which there exists a functional bi-representation of $G$ over $\Fset$ of dimension $R$.
\end{lemma}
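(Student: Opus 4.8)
The plan is to show the two inequalities corresponding to the equality of two minimization problems: the minrank of $G$, which minimizes $\rank_\Fset(M)$ over matrices $M$ representing $G$, and the smallest dimension $R$ of a functional bi-representation. The bridge between the two is the elementary fact from linear algebra that a matrix $M \in \Fset^{n \times n}$ has rank at most $R$ if and only if it can be written as a sum $M = \sum_{i=1}^{R} a_i b_i^{\top}$ of $R$ rank-one matrices, equivalently $M_{u,v} = \sum_{i=1}^{R} (a_i)_u (b_i)_v$ for column vectors $a_i, b_i \in \Fset^n$. Identifying the vertex set $V$ with $\{1,\ldots,n\}$, such a decomposition is exactly the same data as a collection of functions $g_i, h_i : V \to \Fset$ with $(a_i)_u = g_i(u)$ and $(b_i)_v = h_i(v)$, and then $M_{u,v} = f(u,v)$ with $f$ as in~(\ref{eq:f}).

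First I would prove that $\minrank_\Fset(G) \geq R_{\min}$, where $R_{\min}$ denotes the smallest dimension of a functional bi-representation. Take a matrix $M$ representing $G$ with $\rank_\Fset(M) = \minrank_\Fset(G) =: R$. Write $M = \sum_{i=1}^{R} a_i b_i^{\top}$ as above and define $g_i(u) = (a_i)_u$ and $h_i(v) = (b_i)_v$ for $i \in [R]$. Then $f(u,v) = M_{u,v}$ for all $u,v \in V$. Since $M$ represents $G$, we have $M_{v,v} \neq 0$, giving condition~1, and $M_{u,v} = 0$ for distinct non-adjacent $u,v$, giving condition~2. Hence $(g_i,h_i)_{i \in [R]}$ is a functional bi-representation of dimension $R$, so $R_{\min} \leq R$.

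Conversely, I would prove $\minrank_\Fset(G) \leq R_{\min}$. Take a functional bi-representation $(g_i,h_i)_{i \in [R]}$ of minimum dimension $R = R_{\min}$, and let $f$ be as in~(\ref{eq:f}). Define the matrix $M \in \Fset^{n \times n}$ by $M_{u,v} = f(u,v)$. By construction $M = \sum_{i=1}^{R} a_i b_i^{\top}$ with $(a_i)_u = g_i(u)$, $(b_i)_v = h_i(v)$, so $\rank_\Fset(M) \leq R$. Properties~1 and~2 of the bi-representation say precisely that $M_{v,v} \neq 0$ and that $M_{u,v} = 0$ for distinct non-adjacent $u,v$, i.e.\ $M$ represents $G$. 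Therefore $\minrank_\Fset(G) \leq \rank_\Fset(M) \leq R = R_{\min}$. Combining the two inequalities yields $\minrank_\Fset(G) = R_{\min}$, as claimed.

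There is no real obstacle here; the lemma is essentially a restatement of the rank-one decomposition of a low-rank matrix in functional language, and the only thing to be careful about is the bookkeeping identification of column vectors indexed by $V$ with functions on $V$, and checking that the two defining conditions of a bi-representation match exactly the two conditions ("nonzero diagonal" and "zero off the edge set") in the definition of a representing matrix. I would also remark that the argument applies verbatim to directed graphs, since Definition~\ref{def:minrank} and Definition~\ref{def:functional} are both stated for directed graphs and the correspondence respects the orientation of the non-edge condition $(u,v) \notin E$.
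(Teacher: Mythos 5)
Your proposal is correct and matches the paper's argument: the paper also observes that the lemma follows directly from the rank factorization $M = AB$ with $A \in \Fset^{N\times R}$, $B \in \Fset^{R\times N}$, identifying $g_i$ with the $i$-th column of $A$ and $h_i$ with the $i$-th row of $B$, which is the same as your rank-one decomposition $M=\sum_i a_i b_i^{\top}$ written term by term. You have merely spelled out the two inequalities that the paper leaves implicit.
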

\noindent
Observe that Lemma~\ref{lemma:minrank_def} follows directly from Definition~\ref{def:minrank} and the linear algebra fact that the rank of a matrix $M \in \Fset^{N \times N}$ is the smallest $R$ for which $M = A \cdot B$ for two matrices $A \in \Fset^{N \times R}$ and $B \in \Fset^{R \times N}$, where the functions $g_i$ and $h_i$ in Definition~\ref{def:functional} correspond to the columns and the rows of such $A$ and $B$ respectively.
A similar definition of the minrank parameter was previously used by Peeters in~\cite{Peeters96} (see also~\cite{ChlamtacH14}), where the role of the functions in Definition~\ref{def:functional} was taken by vectors.

\subsection{Generalized Kneser Graphs}

We prove now upper bounds on the minrank of generalized Kneser graphs $K(d,s,T)$ (recall Definition~\ref{def:Kneser}).
The proofs borrow ideas from~\cite{AlonBS91} and~\cite{AlonUnion98}.
We start with an upper bound on the minrank of $K(d,s,T)$ over $\Fset_p$ for all sufficiently large primes $p$. Note that a slightly improved bound is given in Section~\ref{sec:incidence} (see Proposition~\ref{prop:minrank_Kneser}).

\begin{proposition}\label{prop:minrank_K(d,s,T)}
For every integers $t \leq s \leq d$, a set $T \subseteq \{0,1,\ldots,s-1\}$ of size $|T|=t$, and a prime $p > s$, \[{\minrank}_p( K(d,s,T) ) \leq \sum_{i=0}^{s-t}{\binom d {i}}.\]
\end{proposition}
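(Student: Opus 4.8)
The plan is to construct an explicit functional bi-representation of $K(d,s,T)$ over $\Fset_p$ of the claimed dimension and then invoke Lemma~\ref{lemma:minrank_def}. The natural starting point is the observation that for two $s$-subsets $A,B \subseteq [d]$, the inner product $\langle \mathbf{1}_A, \mathbf{1}_B \rangle = |A \cap B|$ is an integer in $\{0,1,\ldots,s\}$, and since $p > s$, these values remain distinct modulo $p$. Write $T = \{t_1,\ldots,t_t\}$ and $\{0,1,\ldots,s\} \setminus T = \{r_1,\ldots,r_{s+1-t}\}$ for the complementary set of intersection sizes; the diagonal entry corresponds to $|A \cap A| = s$, which lies in the complement since $T \subseteq \{0,\ldots,s-1\}$. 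Then I would consider the polynomial in one real variable $x$ given by $P(x) = \prod_{j=1}^{s+1-t}(x - r_j)$, which vanishes precisely at the "bad" intersection sizes (the ones not in $T$), except that I actually want the opposite: I want $f(A,B) = 0$ when $|A \cap B| \notin T$. So the correct choice is $Q(x) = \prod_{j : r_j \neq s}(x - r_j)$, a polynomial of degree $s-t$ that vanishes at every intersection size outside $T$ except the value $s$, and is nonzero at $s$. Setting $f(A,B) = Q(|A \cap B|) \pmod p$ gives $f(A,A) = Q(s) \neq 0 \pmod p$ (using $p > s$ to control the factors) and $f(A,B) = 0$ whenever $A,B$ are non-adjacent, i.e. $|A \cap B| \notin T$.

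The second and main step is to bound the rank of this construction, equivalently to exhibit $f$ as a sum of $R = \sum_{i=0}^{s-t}\binom{d}{i}$ products $g_i(A) h_i(B)$. Since $|A \cap B| = \sum_{k=1}^d (\mathbf{1}_A)_k (\mathbf{1}_B)_k$ and $Q$ has degree $s-t$, expanding $Q(|A \cap B|)$ as a polynomial in the $2d$ variables $((\mathbf{1}_A)_k, (\mathbf{1}_B)_k)_k$ produces, after collecting, a sum of terms each of which is a monomial in the $A$-coordinates times a monomial in the $B$-coordinates, where the $A$-monomial has degree at most $s-t$. Because each coordinate $(\mathbf{1}_A)_k$ is $0/1$-valued, every such monomial in the $A$-coordinates is the indicator $\mathbf{1}[S \subseteq A]$ for some set $S \subseteq [d]$ with $|S| \le s-t$; the number of such multilinear monomials is exactly $\sum_{i=0}^{s-t}\binom{d}{i}$. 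Grouping the expansion of $f(A,B)$ by the $A$-side monomial $\prod_{k \in S}(\mathbf{1}_A)_k$ and letting $g_S(A) = \prod_{k \in S}(\mathbf{1}_A)_k$ and $h_S(B)$ be the (polynomial) coefficient, we get $f(A,B) = \sum_{S : |S| \le s-t} g_S(A) h_S(B)$, a functional bi-representation of dimension $R$. By Lemma~\ref{lemma:minrank_def} this yields $\minrank_p(K(d,s,T)) \le R$.

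I expect the main obstacle to be purely bookkeeping rather than conceptual: one must be careful that reducing $Q(|A \cap B|)$ to a multilinear polynomial in the $A$-coordinates (using $(\mathbf{1}_A)_k^2 = (\mathbf{1}_A)_k$) does not change its degree in the $A$-variables — it can only decrease it — so the degree stays at most $s-t$; and one must verify that $Q(s) \not\equiv 0 \pmod p$, which holds because $s - r_j$ for $r_j < s$ is a nonzero integer of absolute value at most $s < p$. A minor subtlety is the degenerate case $t = s$: then $Q$ is the empty product, $Q \equiv 1$, $R = \binom{d}{0} = 1$, and indeed $f \equiv 1$ represents the graph (which has edges between all pairs with $|A\cap B| = $ the unique element of $T$ — but here $T$ has size $s$ and lies in $\{0,\ldots,s-1\}$, which forces $T = \{0,1,\ldots,s-1\}$ and the bound $\binom{d}{0}+\binom{d}{1}+\cdots$ — so one should double-check the indexing in this boundary regime). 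Finally, since the construction naturally lives over $\R$ (or $\Z$) before reduction, one could alternatively phrase the rank bound via Fact~\ref{fact:rankp_R}, first building an integer matrix of real rank $\le R$ and then reducing mod $p$; either route works, and I would present whichever is cleaner.
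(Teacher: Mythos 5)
Your proposal is correct and takes essentially the same approach as the paper: the same annihilating polynomial $\prod_{j\in\{0,\dots,s-1\}\setminus T}(x-j)$ of degree $s-t$, the same multilinearization of $\langle \mathbf{1}_A,\mathbf{1}_B\rangle$ using $z^2=z$, and the same grouping by $A$-side monomials to get a functional bi-representation of dimension $\sum_{i=0}^{s-t}\binom{d}{i}$, concluded via Lemma~\ref{lemma:minrank_def}.
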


\begin{proof}
Let $p > s$ be a prime, and let $f: \{0,1\}^{d} \times \{0,1\}^{d} \rightarrow \Fset_p$ be the function defined by
\[ f(x,y) = \prod_{j \in \{0,1,\ldots,s-1\} \setminus T}{\Big ( \sum_{i=1}^{d}{x_i y_i} -j \Big )}~~~(\mod~p)\]
for every $x,y \in \{0,1\}^d$.
Expanding $f$ as a linear combination of monomials, the relation $z^2=z$ for $z \in \{0,1\}$ implies that one can reduce to $1$ the exponent of each variable occurring in a monomial. It follows that $f$ can be represented as a multilinear polynomial in the $2d$ variables of $x$ and $y$.
By combining terms involving the same monomial in the variables of $x$, one can write $f$ as in~\eqref{eq:f}
for an integer $R$ and functions $g_i,h_i : \{0,1\}^d \rightarrow \Fset_p$ such that the $g_i$'s are distinct multilinear monomials of total degree at most $s-t$ in $d$ variables. It follows that $R \leq \sum_{i=0}^{s-t}{{\binom d {i}}}$.

Now, denote by $V$ the vertex set of the graph $K(d,s,T)$ and identify each vertex $X \in V$ with an indicator vector $c_X \in \{0,1\}^d$ in the natural way.
We observe that the functions $g_i$ and $h_i$ restricted to $V$ form a functional bi-representation of $K(d,s,T)$ over $\Fset_p$.
Indeed, for every two vertices $A,B \in V$ we have
$f(c_A,c_B) = \prod_{j \in \{0,1,\ldots,s-1\} \setminus T}{( |A \cap B| -j)}~(\mod~p)$.
If $A$ and $B$ are distinct and non-adjacent in $K(d,s,T)$ then $|A \cap B| \in \{0,1,\ldots,s-1\} \setminus T$, and thus $f(c_A,c_B) = 0$.
On the other hand, every vertex $A$ satisfies $|A| = s$ and thus, using the assumption $p>s$, $f(c_A,c_A) \neq 0$.
By Lemma~\ref{lemma:minrank_def} it follows that ${\minrank}_p( K(d,s,T) ) \leq R$, and we are done.
\end{proof}

We next consider a special case of the generalized Kneser graphs corresponding to one intersection size.

\begin{proposition}\label{prop:Gp}
For every prime $p$ and integer $d \geq 2p-1$, ${\minrank}_p(K(d,2p-1,\{p-1\})) \leq \sum_{i=0}^{p-1}{\binom {d} {i}}$.
\end{proposition}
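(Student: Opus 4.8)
The plan is to adapt the polynomial construction behind Proposition~\ref{prop:minrank_K(d,s,T)}, but to exploit arithmetic modulo the small prime $p$ so as to roughly halve the degree of the polynomial involved. The straightforward ``vanishing polynomial'' $\prod_{j \in \{0,\ldots,2p-2\}\setminus\{p-1\}}(\sum_i x_i y_i - j)$ has degree $2p-2$, which would only give the weaker bound $\sum_{i=0}^{2p-2}\binom{d}{i}$. Instead I would work over $\Fset_p$ with
\[ f(x,y) = \prod_{j=0}^{p-2}\Big( \sum_{i=1}^{d}{x_i y_i} - j \Big) \pmod p, \]
a polynomial of degree only $p-1$ in the inner product $\sum_i x_i y_i$.

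First I would perform the multilinearization step exactly as in the proof of Proposition~\ref{prop:minrank_K(d,s,T)}: expanding $f$ over $\{0,1\}^d \times \{0,1\}^d$ and using $z^2 = z$ to reduce every exponent to at most $1$ presents $f$ as a multilinear polynomial in the $2d$ variables; grouping terms by their monomial in the variables of $x$ then writes $f(x,y) = \sum_{i=1}^{R}{g_i(x) h_i(y)}$ as in~\eqref{eq:f}, where the $g_i$ are distinct multilinear monomials in the $d$ variables of $x$ of total degree at most $p-1$. Hence $R \leq \sum_{i=0}^{p-1}{\binom d i}$.

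Next I would check that, identifying every vertex $X$ of $K(d,2p-1,\{p-1\})$ with its indicator vector $c_X \in \{0,1\}^d$, the restrictions of the $g_i$ and $h_i$ to the vertex set form a functional bi-representation of the graph over $\Fset_p$, so that Lemma~\ref{lemma:minrank_def} yields the claimed bound. For vertices $A, B$ one has $f(c_A, c_B) = \prod_{j=0}^{p-2}(|A \cap B| - j) \pmod p$. On the diagonal $|A \cap A| = |A| = 2p-1 \equiv -1 \pmod p$, and as $j$ ranges over $\{0,\ldots,p-2\}$ the factors $(|A| - j) \equiv (p-1-j)$ run over all nonzero residues modulo $p$, so $f(c_A,c_A) \equiv (p-1)! \equiv -1 \not\equiv 0 \pmod p$ by Wilson's theorem. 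For distinct non-adjacent $A, B$ one has $|A \cap B| \neq p-1$, while $A \neq B$ and $|A| = |B| = 2p-1$ force $|A \cap B| \leq 2p-2$; since $p-1$ is the only integer in $\{0,1,\ldots,2p-2\}$ congruent to $p-1$ modulo $p$, it follows that $|A \cap B| \equiv j \pmod p$ for some $j \in \{0,\ldots,p-2\}$, and the corresponding factor makes $f(c_A,c_B) = 0$.

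The one subtle point — and the reason this improvement on Proposition~\ref{prop:minrank_K(d,s,T)} is possible at all — is the interplay between reduction modulo $p$ and the actual range of intersection sizes: modulo $p$ the forbidden sizes $\{0,\ldots,2p-2\}\setminus\{p-1\}$ collapse onto $\{0,\ldots,p-2\}$, and the only other size congruent to $p-1$ that could ever occur, namely $2p-1$, appears exclusively on the diagonal, where Wilson's theorem guarantees non-vanishing. Everything else is the routine multilinearization and monomial count already indicated, so I do not expect any genuine obstacle beyond keeping this congruence bookkeeping straight.
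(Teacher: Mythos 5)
Your proof is correct and matches the paper's argument essentially line for line: you use the same degree-$(p-1)$ polynomial $\prod_{j=0}^{p-2}(\sum_i x_i y_i - j)$ over $\Fset_p$, the same multilinearization and monomial count, and the same observation that $|A\cap B|\le 2p-2$ for distinct vertices forces $|A\cap B|\not\equiv p-1\pmod p$. The only difference is cosmetic -- you spell out the diagonal value via Wilson's theorem, whereas the paper just notes $|A|=2p-1\equiv p-1\pmod p$.
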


\begin{proof}
For a prime $p$ and an integer $d \geq 2p-1$, consider the graph $G = K(d,2p-1, \{p-1\})$.
Let $f: \{0,1\}^{d} \times \{0,1\}^{d} \rightarrow \Fset_p$ be the function defined by
\[ f(x,y) = \prod_{j=0}^{p-2}{\Big ( \sum_{i=1}^{d}{x_i y_i}-j \Big )}~~(\mod~p)\]
for every $x,y \in \{0,1\}^d$.
By a repeated use of the relation $z^2=z$ for $z \in \{0,1\}$, the function $f$ can be represented as a multilinear polynomial in the $2d$ variables of $x$ and $y$. By combining terms involving the same monomial in the variables of $x$, it follows that one can write $f$ as in~\eqref{eq:f}
for an integer $R$ and functions $g_i,h_i : \{0,1\}^d \rightarrow \Fset_p$ such that the $g_i$'s are distinct multilinear monomials of total degree at most $p-1$ in $d$ variables. It thus follows that $R \leq \sum_{i=0}^{p-1}{\binom {d} {i}}$.

Now, denote by $V$ the vertex set of $G$ and, as before, identify each vertex $X \in V$ with an indicator vector $c_X \in \{0,1\}^d$ in the natural way.
We observe that the functions $g_i$ and $h_i$ restricted to $V$ form a functional bi-representation of $G$ over $\Fset_p$.
Indeed, for every two vertices $A,B \in V$ we have
$f(c_A,c_B) = 0$ if and only if $| A \cap B| \neq p-1~(\mod~p)$.
If $A$ and $B$ are distinct non-adjacent vertices in $G$ then $|A \cap B| \neq p-1$, so since $|A|=|B|=2p-1$ it follows that $|A \cap B| \neq p-1~(\mod~p)$ as well, thus $f(c_A,c_B) = 0$.
On the other hand, every $A \in V$ satisfies $|A| = 2p-1$, so $|A| = p-1~(\mod~p)$, and thus $f(c_A,c_A) \neq 0$.
By Lemma~\ref{lemma:minrank_def} it follows that ${\minrank}_p( G ) \leq R$, and we are done.
\end{proof}

\subsection{Orthogonality versus Non-orthogonality}\label{subsec:ortho}

For a prime $p$ and an integer $d \geq 1$, let $G_1(d,p)$ be the graph whose vertex set $V$ consists of the non-self-orthogonal vectors of $\Fset_p^d$, such that two distinct vertices are adjacent if they are not orthogonal over $\Fset_p$.
The minrank of $G_1(d,p)$ over $\Fset_p$ is equal to $d$.
For the lower bound, observe that $G_1(d,p)$ contains an independent set of size $d$.
For the upper bound, consider the $|V| \times d$ matrix $M$ over $\Fset_p$ in which the row indexed by a vertex $v \in V$ is $v$, and notice that the matrix $M \cdot M^T$ represents $G_1(d,p)$ and that its rank is at most $d$.
Note that variants of the graph $G_1(d,p)$ were found useful in the study of the minrank parameter (see, e.g.,~\cite{Peeters96} and~\cite[Section ~4.1]{BlasiakKL13}).

It is natural to consider a variant of $G_1(d,p)$ in which the vertices are replaced by the self-orthogonal vectors of $\Fset_p^d$ and the edges are defined in the same way. Namely, let $G_2(d,p)$ be the graph whose vertex set consists of the self-orthogonal vectors of $\Fset_p^d$, such that two distinct vertices are adjacent if they are not orthogonal over $\Fset_p$. We prove below that in contrast to $G_1(d,p)$ the minrank of $G_2(d,p)$ over $\Fset_p$, for a fixed $p$, grows exponentially in $d$.
To this end, we prove the following upper bound on the minrank of its complement.
The proof is inspired by an idea used in the context of matching vector codes by Dvir, Gopalan, and Yekhanin~\cite{DvirGY11}.

\begin{proposition}\label{prop:G_2}
For every prime $p$ and an integer $d \geq 1$, ${\minrank}_p(\overline{G_2(d,p)}) \leq {\binom {d+p-2} {p-1}} +1$.
\end{proposition}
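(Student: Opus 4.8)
The plan is to build a functional bi-representation of $\overline{G_2(d,p)}$ over $\Fset_p$ and invoke Lemma~\ref{lemma:minrank_def}. Recall that the vertex set $V$ consists of the self-orthogonal vectors $v \in \Fset_p^d$ (i.e.\ $\langle v,v\rangle = 0$), and two distinct vertices $u,v$ are non-adjacent in $G_2(d,p)$ — hence adjacent in $\overline{G_2(d,p)}$ — precisely when $\langle u,v\rangle = 0$. So I need a function $f: V \times V \to \Fset_p$ of the form $\sum_i g_i(u) h_i(v)$ with $f(v,v) \neq 0$ for all $v$, and $f(u,v) = 0$ for distinct $u,v$ with $\langle u,v\rangle \neq 0$. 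The natural candidate, in the spirit of the Dvir–Gopalan–Yekhanin matching-vector-code idea, is to take $f(u,v) = \prod_{j=1}^{p-1}\big(\langle u,v\rangle - j\big) \pmod p$: for $u = v$ this is $\prod_{j=1}^{p-1}(-j) = -(p-1)! \neq 0$ by Wilson's theorem (or simply because none of the factors vanishes), while for distinct $u,v$ with $\langle u,v\rangle \not\equiv 0 \pmod p$ one of the factors is zero, so $f(u,v) = 0$. This is exactly what is needed.

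The second step is to bound the dimension $R$ of the resulting bi-representation. Expanding $f(u,v) = \prod_{j=1}^{p-1}\big(\sum_{i=1}^d u_i v_i - j\big)$ as a polynomial in $u$ and $v$, it is a polynomial of total degree at most $p-1$ in the variables $u_1,\ldots,u_d$ (and symmetrically at most $p-1$ in $v$). Collecting terms by the monomial in the $u$-variables, we can write $f$ as in~\eqref{eq:f} where the $g_i$ range over distinct monomials in $u_1,\ldots,u_d$ of total degree at most $p-1$. A priori the number of such monomials is $\sum_{k=0}^{p-1}\binom{d+k-1}{k}=\binom{d+p-1}{p-1}$, which is slightly larger than the claimed bound. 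To sharpen it, I would observe that on the vertex set $V$ we may reduce each variable's exponent below $p$ using $u_i^p = u_i$ over $\Fset_p$ — but since the total degree is already at most $p-1$ that alone does not help. The actual saving comes from using the defining relation of $V$: every $v \in V$ satisfies $\sum_{i=1}^d v_i^2 = 0$, which lets us eliminate one degree-two monomial, say $v_d^2 = -\sum_{i=1}^{d-1} v_i^2$, wherever it appears; iterating, we may assume no $g_i$ contains the variable $u_d$ to a power $\geq 2$, equivalently we restrict to monomials of degree $\le p-1$ in which $u_d$ appears with exponent $0$ or $1$. Counting these: the monomials of degree $\leq p-1$ in $u_1,\ldots,u_{d-1}$ alone (exponent of $u_d$ equal to $0$) number $\binom{d+p-2}{p-1}$, and those with $u_d$ to the first power are monomials of degree $\leq p-2$ in $u_1,\ldots,u_{d-1}$ — but this needs to be reconciled to land on $\binom{d+p-2}{p-1}+1$ exactly, so the precise elimination has to be done carefully rather than this crude split.

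The main obstacle is therefore purely the dimension count: getting from the naive $\binom{d+p-1}{p-1}$ down to $\binom{d+p-2}{p-1}+1$. The correct way is likely to note that $f$ is a polynomial in the single "variable" $w := \langle u,v\rangle$, namely $f = \prod_{j=1}^{p-1}(w-j)$, which over $\Fset_p$ equals $w^{p-1}-1$ (again Wilson plus the fact that $\prod_{j \in \Fset_p^*}(w-j) = w^{p-1}-1$). So $f(u,v) = \big(\sum_i u_i v_i\big)^{p-1} - 1$. The constant $-1$ contributes the "$+1$" (a single $g_i \equiv 1$, $h_i \equiv -1$), and the term $(\sum_i u_i v_i)^{p-1}$, expanded multinomially, is a sum over exponent vectors $\alpha$ with $|\alpha| = p-1$ of $\binom{p-1}{\alpha} \prod u_i^{\alpha_i} \prod v_i^{\alpha_i}$; collecting by the $u$-monomial gives at most $\binom{d+p-2}{p-1}$ distinct $g_i$'s (the number of exponent vectors of weight exactly $p-1$ in $d$ variables). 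Hence $R \leq \binom{d+p-2}{p-1} + 1$. I would then check the two bi-representation conditions as above and conclude via Lemma~\ref{lemma:minrank_def}. The only subtlety to verify is the identity $\prod_{j=1}^{p-1}(w-j) \equiv w^{p-1}-1 \pmod p$ as polynomials over $\Fset_p$, which is standard (both sides are monic of degree $p-1$ with the same $p-1$ roots $1,\ldots,p-1$, and matching constant terms gives $(-1)^{p-1}(p-1)! \equiv -1$ by Wilson — for $p=2$ one checks directly).
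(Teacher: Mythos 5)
Your final argument is correct and essentially identical to the paper's: the paper writes $f(x,y) = 1 - \bigl(\sum_i x_iy_i\bigr)^{p-1}$ directly, which is $-1$ times your $\langle u,v\rangle^{p-1}-1$, and counts one constant $g_1\equiv 1$ plus the $\binom{d+p-2}{p-1}$ degree-$(p-1)$ $u$-monomials from the multinomial expansion, then checks the bi-representation conditions via Fermat's little theorem and self-orthogonality exactly as you do. The detour in your middle paragraph (eliminating $u_d^2$ via the relation $\sum_i v_i^2 = 0$) is unnecessary and, as you noted yourself, does not land on the right count; the pivot to viewing $f$ as $w^{p-1}-1$ in $w = \langle u,v\rangle$ is the whole argument.
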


\begin{proof}
Let $V$ be the vertex set of $G_2(d,p)$, i.e., the set of self-orthogonal vectors of $\Fset_p^d$.
Let $f: V \times V \rightarrow \Fset_p$ be the function defined by
\[ f(x,y) = 1 -  \Big (\sum_{i=1}^{d}{x_i y_i} \Big )^{p-1}~~(\mod~p)\]
for every $x,y \in V$.
Expanding $f$ as a linear combination of monomials and combining terms involving the same monomial in the variables of $x$, it follows that $f$ can be written as in~\eqref{eq:f}
for an integer $R$ and functions $g_i, h_i : V \rightarrow \Fset_p$, where $g_1=1$ and the $g_i$'s for $i \geq 2$ are distinct monomials of degree exactly $p-1$ in $d$ variables. This yields that $R \leq {\binom {d+p-2} {p-1}} +1$.

Now, let us show that the functions $g_i$ and $h_i$ form a functional bi-representation of $\overline{G_2(d,p)}$ over $\Fset_p$. Indeed, by Fermat's little Theorem, $f(u,v) \neq 0$ if and only if $u$ and $v$ are orthogonal. If $u$ and $v$ are distinct non-adjacent vertices in $\overline{G_2(d,p)}$ then they are not orthogonal, thus $f(u,v)=0$. On the other hand, by the self-orthogonality of the vectors in $V$, we have $f(v,v) \neq 0$ for every $v \in V$.
By Lemma~\ref{lemma:minrank_def} it follows that ${\minrank}_p( \overline{G_2(d,p)} ) \leq R$, and we are done.
\end{proof}

Combining Proposition~\ref{prop:G_2} with Lemma~\ref{lemma:minrank_comp} implies the following.

\begin{corollary}
For every prime $p$ and an integer $d \geq 1$,
\[{\minrank}_p(G_2(d,p)) \geq \frac{n}{{\binom {d+p-2} {p-1}} +1},\]
where $n$ stands for the number of vertices in $G_2(d,p)$. In particular, using $n \geq p^{d-p+1}$, for a fixed prime $p$,
\[{\minrank}_p(G_2(d,p)) \geq p^{(1-o(1)) \cdot d}.\]
\end{corollary}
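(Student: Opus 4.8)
The plan is to combine Proposition~\ref{prop:G_2} with Lemma~\ref{lemma:minrank_comp} and then estimate the number of vertices of $G_2(d,p)$ from below. The first part is immediate: since $\overline{G_2(d,p)}$ is the complement of $G_2(d,p)$, Lemma~\ref{lemma:minrank_comp} gives $\minrank_p(G_2(d,p)) \cdot \minrank_p(\overline{G_2(d,p)}) \geq n$, and substituting the bound $\minrank_p(\overline{G_2(d,p)}) \leq \binom{d+p-2}{p-1}+1$ from Proposition~\ref{prop:G_2} yields the first displayed inequality.

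The only real work is the lower bound $n \geq p^{d-p+1}$ on the number of self-orthogonal vectors of $\Fset_p^d$. First I would count solutions to $\sum_{i=1}^{d} x_i^2 = 0$ over $\Fset_p$ by a direct argument that avoids invoking the full classification of quadratic forms: fix the last $p-1$ coordinates $x_{d-p+2}, \ldots, x_d$ arbitrarily; this leaves a non-degenerate (in fact diagonal) quadratic equation $\sum_{i=1}^{d-p+1} x_i^2 = c$ in the remaining variables for a fixed $c \in \Fset_p$, and since such an equation in $m \geq 2$ variables always has at least one solution (indeed at least $p^{m-1}$ solutions, as $x_1^2 + x_2^2$ already represents every element of $\Fset_p$), we get at least $p^{d-p+1} \cdot 1 = p^{d-p+1}$ self-orthogonal vectors. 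Alternatively, and more cleanly, one notes that for any choice of $x_2, \ldots, x_d$ the equation $x_1^2 = -(x_2^2 + \cdots + x_d^2)$ has a solution in $x_1$ whenever the right-hand side is a square, which happens for at least a constant fraction; pairing with a second free variable removes the square obstruction entirely, giving the clean count $n \geq p^{d-1}$, certainly at least $p^{d-p+1}$. Plugging $n \geq p^{d-p+1}$ into the first inequality and using $\binom{d+p-2}{p-1}+1 \leq (d+p-2)^{p-1}$, which is $p^{O(1)} \cdot d^{p-1} = p^{o(d)}$ for fixed $p$, gives $\minrank_p(G_2(d,p)) \geq p^{d-p+1}/p^{o(d)} = p^{(1-o(1))d}$.

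The main obstacle, such as it is, is purely bookkeeping: making sure the elementary count of self-orthogonal vectors is stated with a clean enough constant that it survives division by the polynomial-in-$d$ factor $\binom{d+p-2}{p-1}+1$. Since that denominator is $p^{o(d)}$ for fixed $p$ while the numerator is $p^{\Theta(d)}$, any lower bound of the form $n \geq p^{d - O_p(1)}$ on the vertex count suffices, so there is substantial slack and no delicate estimate is needed. I would therefore present the vertex-count lemma with the crude but correct bound $n \geq p^{d-1}$ (or $p^{d-p+1}$, matching the statement), remark that it follows from the fact that $x_1^2 + x_2^2$ represents every element of $\Fset_p$, and then assemble the two displayed inequalities in one line each.
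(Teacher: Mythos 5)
Your overall strategy is exactly the paper's: the first displayed inequality follows immediately from Proposition~\ref{prop:G_2} together with Lemma~\ref{lemma:minrank_comp}, and the second then follows from a lower bound on the number $n$ of self-orthogonal vectors in $\Fset_p^d$; the paper states the corollary as a direct consequence and does not spell out the count. Your counting argument, however, has some slips worth flagging. Fixing the \emph{last} $p-1$ coordinates gives only $p^{p-1}$ choices, not the $p^{d-p+1}$ you write; what you want is to fix the \emph{first} $d-p+1$ coordinates (giving $p^{d-p+1}$ choices) and observe that the residual equation $\sum_{i=d-p+2}^{d} x_i^2 = c$ in the last $p-1$ variables always has a solution: for $p=2$ it is just $x_d=c$, and for $p \geq 3$ there are at least two free variables and $x^2+y^2$ already represents every element of $\Fset_p$ by a pigeonhole argument on the two sets of size $(p+1)/2$. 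The parenthetical claim that $\sum_{i=1}^m x_i^2 = c$ has at least $p^{m-1}$ solutions for $m \geq 2$ is not literally true (take $m=2$, $c=0$, $p \equiv 3 \pmod 4$: only the zero solution), and your ``clean count'' $n \geq p^{d-1}$ can fail for even $d$, where the exact count is $p^{d-1} \pm (p-1)p^{d/2-1}$; but the weaker $n \geq p^{d-2}$ does hold for odd $p$ and suffices since $p^{d-2} \geq p^{d-p+1}$ when $p \geq 3$, while for $p=2$ one has $n = 2^{d-1} = p^{d-p+1}$ exactly. As you correctly observe, none of these imprecisions affects the final $p^{(1-o(1))d}$ bound, since the denominator $\binom{d+p-2}{p-1}+1$ is only $p^{o(d)}$ for fixed $p$ and there is ample slack.
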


\subsection{A Directed Example}\label{subsec:directed}

We end this section with a quick application of multivariate polynomials to the minrank of a directed graph.
The proof employs an idea of Blokhuis~\cite{Blokhuis93} used in the study of the Sperner capacity of the cyclic triangle.

\begin{proposition}
For an integer $d \geq 1$, let $G = (V,E)$ be the directed graph on $V = \{0,1,2\}^d$ where for every distinct $u,v \in V$, $(u,v) \in E$ if $u-v \in \{0,2\}^d~(\mod~3)$. Then, ${\minrank}_3(G) = 2^d$.
\end{proposition}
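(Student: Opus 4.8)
The plan is to prove the two bounds ${\minrank}_3(G) \leq 2^d$ and ${\minrank}_3(G) \geq 2^d$ separately. For the upper bound I would use the functional bi-representation framework of Lemma~\ref{lemma:minrank_def}, exactly as in the preceding propositions but now adapted to the directed setting. The natural candidate function is $f(u,v) = \prod_{i=1}^{d}\big(1 - (u_i - v_i)\big)~(\mod~3)$, or some close variant built from the coordinatewise differences: I want $f(v,v) \neq 0$ always, and $f(u,v) = 0$ whenever $(u,v) \notin E$, i.e., whenever $u - v \notin \{0,2\}^d \pmod 3$, which happens exactly when some coordinate of $u-v$ equals $1 \pmod 3$. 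So I should pick a univariate polynomial $\varphi$ over $\Fset_3$ with $\varphi(0) = \varphi(2) \neq 0$ and $\varphi(1) = 0$ — for instance $\varphi(t) = 1 - t^2 \cdot(\text{something})$, or more simply note that over $\Fset_3$ the polynomial $t(t-2) = t^2 - 2t = t^2 + t$ vanishes at $0$ and $2$ and equals $2$ at $1$; I want the opposite, so take $\varphi(t) = (t-1)^2 = t^2 - 2t + 1$, which gives $\varphi(1)=0$, $\varphi(0)=1$, $\varphi(2)=1$. Then $f(u,v) = \prod_{i=1}^d \varphi(u_i - v_i) = \prod_{i=1}^d\big((u_i-v_i)^2 - 2(u_i-v_i) + 1\big)$. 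Expanding the product and collecting terms by the monomial in the $u$-variables, each factor is a polynomial of degree $\leq 2$ in $u_i$, but using $u_i^3 = u_i$ over $\Fset_3$ — wait, degrees only go up to $2$ here so no reduction is even needed — we see $f$ is a polynomial in which each $u_i$ appears with degree at most $2$, giving at most $3^d$ monomials. That only yields ${\minrank}_3(G) \leq 3^d$, not $2^d$, so the crude expansion is insufficient; the key refinement, following Blokhuis, is that one must not expand each quadratic factor fully but instead regroup so that the number of distinct $u$-side functions is $2^d$. Concretely, I would write $\varphi(u_i - v_i)$ in a form like $a(v_i) + b(v_i)\cdot u_i + c(v_i)\cdot u_i^2$ but then observe that on the restricted domain $u_i \in \{0,1,2\}$, $\varphi(u_i - v_i)$ as a function of $u_i$ lies in the $2$-dimensional span of $\{1, \psi(u_i)\}$ for a suitable single function $\psi$ depending on $v_i$ — this is the crucial trick and the main obstacle of the proof. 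I expect the correct statement is that each factor, as a function of $u_i$ for fixed $v_i$, takes only two distinct values (since $\varphi$ takes values in $\{0,1\}$ and the value $0$ occurs for exactly one residue of $u_i-v_i$), hence lies in a $2$-dimensional function space spanned by the constant $1$ and the indicator of "$u_i - v_i = 1$"; multiplying out $d$ such factors, $f$ lies in the tensor product, which is spanned by $2^d$ products of such indicators, and each such product is a function of $u$ alone times a function of $v$ alone. This gives the functional bi-representation of dimension $2^d$, and then Lemma~\ref{lemma:minrank_def} yields ${\minrank}_3(G) \leq 2^d$.

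For the lower bound I would exhibit, inside $G$, a structure forcing rank $\geq 2^d$ in any representing matrix. The cleanest route is to find an induced subgraph on $2^d$ vertices whose only representing matrices have full rank $2^d$ — for example an induced subgraph with no non-loop edges at all (an independent set in the directed sense, meaning for every ordered pair $(u,v)$ of distinct chosen vertices, $(u,v) \notin E$), since then any representing matrix restricted to those rows and columns is diagonal with nonzero diagonal, hence of rank $2^d$, and the rank of the full matrix is at least that of a submatrix. Here $(u,v) \notin E$ for all ordered pairs means $u - v$ has some coordinate equal to $1 \pmod 3$ and also $v - u$ has some coordinate equal to $1 \pmod 3$; note that if $u_i - v_i = 1$ then $v_i - u_i = 2 \neq 1$, so I cannot use the same coordinate for both directions. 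A good candidate is the set of all $u \in \{0,1,2\}^d$ with no coordinate equal to... hmm, that restricts the number of vertices below $2^d$ unless chosen carefully — actually $\{0,1\}^d$ has size $2^d$: for $u \neq v$ in $\{0,1\}^d$, some coordinate has $u_i = 1, v_i = 0$ (so $u_i - v_i = 1$, good, $(u,v)\notin E$) and by symmetry some coordinate has $u_j = 0, v_j = 1$ (so $v_j - u_j = 1$, good, $(v,u) \notin E$). Hence $\{0,1\}^d$ is an independent set of size $2^d$ in the directed sense, and this immediately gives ${\minrank}_3(G) \geq 2^d$.

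I anticipate the main obstacle is the upper-bound argument — specifically, verifying rigorously that the naive monomial expansion can be regrouped down to exactly $2^d$ $u$-side functions rather than $3^d$. The honest way to do this is to choose $\varphi$ and a companion function on $\{0,1,2\}$ so that for each fixed $v_i$, the map $u_i \mapsto \varphi(u_i - v_i)$ lies in the span of two explicitly named functions of $u_i$ (independent of $v_i$); the right pair is $\{\mathbf{1}, \delta\}$ where $\delta(u_i)$ is an indicator, but one has to be a little careful because the ``forbidden residue'' $u_i - v_i = 1$ corresponds to a $v_i$-dependent value of $u_i$. I would resolve this by instead writing $\varphi(u_i - v_i) = \sum_{r \in \{0,1,2\}} \varphi(u_i - r)\cdot \mathbf{1}[v_i = r]$ — no, that reintroduces three terms. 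The genuinely correct formulation, and the one I would commit to in the writeup, is: since $\varphi(t) = 1$ for $t \in \{0,2\}$ and $\varphi(t) = 0$ for $t = 1$, we have $\varphi(u_i - v_i) = 1 - \mathbf{1}[u_i - v_i \equiv 1] = 1 - \mathbf{1}[u_i \equiv v_i + 1]$, and then expand the product $\prod_{i=1}^d(1 - \mathbf{1}[u_i \equiv v_i + 1])$ over subsets $S \subseteq [d]$ as $\sum_{S}(-1)^{|S|}\prod_{i \in S}\mathbf{1}[u_i \equiv v_i+1]$; each summand factors as $\big(\prod_{i\in S}\mathbf{1}[u_i \equiv a_i]\big)\big(\text{depends on }v\big)$ only after further splitting on the value of $v_i$, which seems to blow up again — so in fact I would double-check whether the intended bound really uses this indicator decomposition or a smarter basis; Blokhuis's original trick for the cyclic triangle writes things in terms of a single nonconstant function per coordinate and gets $2^d$, and I would follow that computation closely, as it is the technical heart of the proposition.
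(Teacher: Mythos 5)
Your proof attempt has genuine gaps in both directions.

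\textbf{Upper bound.} You over-constrain the univariate polynomial by requiring $\varphi(0) = \varphi(2) \neq 0$ and $\varphi(1) = 0$, which forces $\varphi$ to have degree at least $2$ over $\Fset_3$ (a nonconstant linear polynomial over $\Fset_3$ is injective, so it cannot take equal values at $0$ and $2$). This is what traps you at $3^d$ and forces you to search for a regrouping trick. But the constraint $\varphi(0) = \varphi(2)$ is unnecessary: all that is needed is $\varphi(0) \neq 0$ (so that $f(v,v) \neq 0$) and $\varphi(1) = 0$ (so that $f(u,v) = 0$ for non-edges). The paper takes $\varphi(t) = t - 1$, which is linear and automatically has $\varphi(0) = -1 \neq 0$, $\varphi(1) = 0$, $\varphi(2) = 1 \neq 0$. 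Then $f(x,y) = \prod_{i=1}^d (x_i - y_i - 1)$ is manifestly multilinear in $x$, so it expands into at most $2^d$ monomials in the $x$-variables, and Lemma~\ref{lemma:minrank_def} gives $\minrank_3(G) \leq 2^d$ directly. That is the Blokhuis trick: choose the factor to be linear, not the indicator.

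\textbf{Lower bound.} Your claim that $\{0,1\}^d$ is an independent set in the directed sense is false. Take $u = (1,0,\ldots,0)$ and $v = (0,\ldots,0)$: then $u - v = (1,0,\ldots,0)$, so indeed $(u,v) \notin E$, but $v - u = (2,0,\ldots,0) \in \{0,2\}^d$, so $(v,u) \in E$. More generally, for $u,v \in \{0,1\}^d$ one has $(u,v) \in E$ exactly when $\{i : u_i = 1\} \subseteq \{i : v_i = 1\}$, so the induced subgraph on $\{0,1\}^d$ is the strict containment order on subsets of $[d]$ — far from empty. Your claimed ``by symmetry'' step (that there is also a coordinate $j$ with $u_j = 0, v_j = 1$) holds only when neither $u \leq v$ nor $v \leq u$, and comparable pairs violate it. What is true, and what the paper uses, is that this induced subgraph is \emph{acyclic} (it is a DAG compatible with the subset partial order), and it then invokes the result of~\cite{BBJK06} that the maximum size of an induced acyclic subgraph lower-bounds minrank: under a topological ordering of the acyclic part, any representing matrix restricted to those indices is triangular with nonzero diagonal, hence has full rank $2^d$. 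Your ``diagonal submatrix'' reasoning is the special case for independent sets, which does not apply here; you need the triangular version.
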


\begin{proof}
We start with the upper bound.
Let $f: V \times V \rightarrow \Fset_3$ be the function defined by
\[ f(x,y) = \prod_{i=1}^{d}{(x_i-y_i-1)} ~~(\mod~3) \]
for every $x,y \in V$.
Expanding $f$ as a linear combination of monomials and combining terms involving the same monomial in the variables of $x$, it follows that one can write $f$ as in~\eqref{eq:f}
for an integer $R$ and functions $g_i, h_i : V \rightarrow \Fset_p$, where the $g_i$'s are distinct multilinear monomials in $d$ variables. This yields that $R \leq 2^d$.

We now show that the functions $g_i$ and $h_i$ form a functional bi-representation of $G$ over $\Fset_3$.
Indeed, for every distinct vertices $u,v \in V$, if $(u,v) \notin E$ then $u_i - v_i = 1~(\mod~3)$ for some $i$, and thus $f(u,v)=0$.
On the other hand, for every $v \in V$, $f(v,v) = (-1)^d \neq 0$.
By Lemma~\ref{lemma:minrank_def} it follows that ${\minrank}_3( G ) \leq R$, as desired.

For the lower bound, observe that the subgraph of $G$ induced by the set of vertices $\{0,1\}^d$ is acyclic.
It was shown in~\cite{BBJK06} that the maximum size of an induced acyclic subgraph forms a lower bound on the minrank parameter, thus the proof is completed.
\end{proof}

\section{Upper Bounds on Minrank via Inclusion Matrices}\label{sec:incidence}

In this section we prove upper bounds on the minrank parameter of graphs using the method of higher incidence matrices, more specifically -- the class of inclusion matrices. The proofs employ ideas from~\cite{FranklW81}. We start with a few notations and facts following~\cite[Chapter~7]{BabaiF92}.

\paragraph{Binomial coefficient polynomials. }
For an integer $k \geq 0$, the binomial coefficient ${\binom x k}$ is a polynomial of degree $k$ over $\R$ defined by
\[ {\binom x k} = \frac{1}{k!} \cdot x(x-1) \cdots (x-k+1).\]
We say that a polynomial is {\em integer-valued} if it takes integer values on integers. We need the following fact (see, e.g.,~\cite[Exercise~7.3.3]{BabaiF92}).

\begin{fact}\label{fact:integer-valued}
For every $k \geq 0$, the integer-valued polynomials of degree at most $k$ are precisely all the integer linear combinations of the polynomials ${\binom x 0}, {\binom x 1}, \ldots, {\binom x k}$.
\end{fact}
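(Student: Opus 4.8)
The plan is to establish the two inclusions separately, the easy one by direct inspection and the harder one via the finite difference operator.

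First I would record that each polynomial $\binom{x}{j}$ is itself integer-valued. On a nonnegative integer $n$ it equals the ordinary binomial coefficient (and is $0$ when $n<j$), while on a negative integer $-m$ one computes $\binom{-m}{j} = \frac{(-m)(-m-1)\cdots(-m-j+1)}{j!} = (-1)^j \binom{m+j-1}{j} \in \Z$. Since $\binom{x}{j}$ has degree exactly $j$, it follows immediately that any integer linear combination $\sum_{j=0}^{k} a_j \binom{x}{j}$ with $a_j \in \Z$ is an integer-valued polynomial of degree at most $k$. This gives one direction.

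For the reverse direction, let $p$ be an integer-valued polynomial of degree at most $k$. Because $\binom{x}{0}, \binom{x}{1}, \ldots, \binom{x}{k}$ have pairwise distinct degrees $0, 1, \ldots, k$, they form a $\Q$-basis of the space of polynomials of degree at most $k$, so there are unique rationals $a_0, \ldots, a_k$ with $p = \sum_{j=0}^{k} a_j \binom{x}{j}$; the task is to show every $a_j$ is an integer. The key tool is the forward difference operator $\Delta$ given by $(\Delta f)(x) = f(x+1) - f(x)$. Pascal's identity, valid as an identity of polynomials, gives $\Delta \binom{x}{j} = \binom{x+1}{j} - \binom{x}{j} = \binom{x}{j-1}$ for $j \geq 1$ and $\Delta \binom{x}{0} = 0$. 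Iterating, $\Delta^{j} p = \sum_{i=j}^{k} a_i \binom{x}{i-j}$, and evaluating at $x=0$ — where $\binom{0}{m}$ vanishes for $m \geq 1$ and equals $1$ for $m=0$ — yields $a_j = (\Delta^{j} p)(0)$. Since $\Delta$ visibly preserves integer-valuedness (if $f$ takes integer values on integers then so does $\Delta f$), the polynomial $\Delta^{j} p$ is integer-valued, hence $a_j = (\Delta^{j} p)(0) \in \Z$ for every $j$, which finishes the proof.

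There is no real obstacle here: the argument is routine once the finite difference viewpoint is adopted. The only two points that merit a line of care are checking integer-valuedness of $\binom{x}{k}$ at negative integers (handled by the sign identity above) and noting that Pascal's identity is being used at the level of polynomials so that the formula $a_j = (\Delta^{j} p)(0)$ is legitimate; an alternative, essentially equivalent, route is a short induction on $k$ using $\Delta p = \sum_{i \geq 1} a_i \binom{x}{i-1}$ together with the fact that $\Delta p$ is integer-valued of degree at most $k-1$.
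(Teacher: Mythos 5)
The paper does not supply a proof of this fact; it is stated with a pointer to~\cite[Exercise~7.3.3]{BabaiF92}, so there is no in-paper argument to compare against. Your proof is correct and is the standard one: the easy inclusion by degree-counting and the sign identity $\binom{-m}{j}=(-1)^j\binom{m+j-1}{j}$, and the converse via the forward difference operator, using that $\Delta\binom{x}{j}=\binom{x}{j-1}$ (Pascal, as a polynomial identity), that $\Delta$ preserves integer-valuedness, and that $a_j=(\Delta^j p)(0)$. One cosmetic point: you say the coefficients $a_j$ are a priori \emph{rational}, which presupposes $p$ has rational coefficients; in fact your own argument already shows this, since $a_j=(\Delta^j p)(0)$ lands in $\Z$ regardless of the ground field one starts over, so you could simply say ``scalars'' there. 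Otherwise the write-up is complete and would serve as a proof of the cited exercise.
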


\paragraph{Inclusion matrices.}
For integers $d \geq s \geq k \geq 0$, let $N^{(d)}(s,k)$ denote the ${\binom d s} \times {\binom d k}$ binary matrix, whose rows and columns are indexed by all $s$-subsets and $k$-subsets of $[d]$ respectively, defined by
\[ (N^{(d)}(s,k))_{A,B} = 1 \mbox{~~if and only if~~} B \subseteq A\]
for every $s$-subset $A$ and $k$-subset $B$ of $[d]$.
In addition, let $M^{(d)}(s,k)$ denote the ${\binom d s} \times {\binom d s}$ integer matrix defined by
\begin{eqnarray}\label{eq:Mk_basic}
M^{(d)}(s,k) = N^{(d)}(s,k) \cdot N^{(d)}(s,k)^T.
\end{eqnarray}
Notice that the entry of $M^{(d)}(s,k)$ indexed by $(A,B)$, where $A,B$ are $s$-subsets of $[d]$, precisely counts the $k$-subsets $X$ of $[d]$ that satisfy $X \subseteq A \cap B$. Hence, for every $s$-subsets $A$ and $B$ of $[d]$,
\begin{eqnarray}\label{eq:Mk}
(M^{(d)}(s,k))_{A,B} = {\binom {|A \cap B|} k}.
\end{eqnarray}

\begin{lemma}\label{lemma:rank_incidence}
For every $d \geq s \geq \ell \geq 0$ and $a_0,\ldots,a_\ell \in \R$, the matrix $M = \sum_{k=0}^{\ell}{a_k \cdot M^{(d)}(s,k)}$ satisfies ${\rank}_\R(M) \leq {\binom d \ell}$.
\end{lemma}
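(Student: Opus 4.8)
The plan is to factor $M$ through the single inclusion matrix $N^{(d)}(s,\ell)$: I will show that $M = N^{(d)}(s,\ell)\cdot C\cdot N^{(d)}(s,\ell)^T$ for a suitable square matrix $C$ of order $\binom d\ell$. Since $N^{(d)}(s,\ell)$ has only $\binom d\ell$ columns, this immediately yields $\rank_\R(M)\le\rank_\R(N^{(d)}(s,\ell))\le\binom d\ell$. As a reality check on what we are proving, by~\eqref{eq:Mk} the $(A,B)$ entry of $M$ equals $q(|A\cap B|)$ for the single polynomial $q(x)=\sum_{k=0}^{\ell}a_k\binom xk$ of degree at most $\ell$, so the lemma asserts that a matrix indexed by $s$-subsets of $[d]$ whose entries are a degree-$\le\ell$ polynomial in the intersection sizes has rank at most $\binom d\ell$, in the spirit of the classical inclusion-matrix arguments.

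The crucial ingredient is the composition identity
\[ N^{(d)}(s,\ell)\cdot N^{(d)}(\ell,k) \;=\; \binom{s-k}{\ell-k}\cdot N^{(d)}(s,k) \qquad \text{for } 0\le k\le \ell\le s\le d. \]
To verify it, fix an $s$-subset $A$ and a $k$-subset $B$ of $[d]$; the $(A,B)$ entry of the left-hand side counts the $\ell$-subsets $X$ of $[d]$ with $B\subseteq X\subseteq A$. There are none unless $B\subseteq A$, while if $B\subseteq A$ then such an $X$ is obtained by adjoining to $B$ an arbitrary $(\ell-k)$-subset of the $(s-k)$-element set $A\setminus B$, giving $\binom{s-k}{\ell-k}$ choices; in both cases this matches the right-hand side. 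Since $0\le \ell-k\le s-k$, the coefficient $\binom{s-k}{\ell-k}$ is a positive integer, so the identity may be rewritten as $N^{(d)}(s,k)=\binom{s-k}{\ell-k}^{-1}\,N^{(d)}(s,\ell)\,N^{(d)}(\ell,k)$.

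Substituting this expression into the definition~\eqref{eq:Mk_basic} of $M^{(d)}(s,k)$ gives
\[ M^{(d)}(s,k) \;=\; N^{(d)}(s,k)\,N^{(d)}(s,k)^T \;=\; N^{(d)}(s,\ell)\cdot\Big(\binom{s-k}{\ell-k}^{-2}\, M^{(d)}(\ell,k)\Big)\cdot N^{(d)}(s,\ell)^T, \]
using that $N^{(d)}(\ell,k)\,N^{(d)}(\ell,k)^T=M^{(d)}(\ell,k)$ has order $\binom d\ell$. Forming $M=\sum_{k=0}^{\ell}a_k M^{(d)}(s,k)$ and pulling the common factors $N^{(d)}(s,\ell)$ and $N^{(d)}(s,\ell)^T$ out of the sum produces the desired factorization with $C=\sum_{k=0}^{\ell}a_k\binom{s-k}{\ell-k}^{-2}\,M^{(d)}(\ell,k)$, and the rank bound follows as above. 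I do not expect a real obstacle here: the two points needing care are the elementary double-counting behind the composition identity and the observation that every binomial coefficient being inverted is nonzero, which is exactly what the hypothesis $\ell\le s$ guarantees. One could equivalently phrase the argument on column spaces---each column of $M^{(d)}(s,k)$ lies in the column space of $N^{(d)}(s,k)$, which by the same identity is contained in that of $N^{(d)}(s,\ell)$---but the matrix-factorization form seems cleanest.
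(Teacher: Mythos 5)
Your proof is correct and rests on exactly the same key ingredient as the paper's: the composition identity $N^{(d)}(s,\ell)\cdot N^{(d)}(\ell,k)=\binom{s-k}{\ell-k}\cdot N^{(d)}(s,k)$, from which one deduces that $M$ is supported on the column space of $N^{(d)}(s,\ell)$. You phrase this as an explicit sandwich factorization $M=N^{(d)}(s,\ell)\,C\,N^{(d)}(s,\ell)^{T}$ while the paper argues directly with column spaces, but as you yourself note at the end these are equivalent presentations of the same argument.
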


\begin{proof}
We first claim that for every $0 \leq k \leq \ell$, every column of $M^{(d)}(s,k)$ is a linear combination of the columns of $N^{(d)}(s,\ell)$.
For $k=\ell$ this follows immediately from~\eqref{eq:Mk_basic}.
To see this for $0 \leq k < \ell$, consider the ${\binom d s} \times {\binom d k}$ matrix $N^{(d)}(s,\ell) \cdot N^{(d)}(\ell,k)$, and observe that the entry indexed by $(A,B)$ in this matrix, where $A,B \subseteq [d]$, $|A|=s$, $|B|=k$, counts the $\ell$-subsets $X$ of $[d]$ that satisfy $B \subseteq X \subseteq A$. If $B \subseteq A$ then the number of these subsets is ${\binom {s-k} {\ell-k}}$ and otherwise it is $0$. It follows that
\[ N^{(d)}(s,\ell) \cdot N^{(d)}(\ell,k) =  {\binom {s-k} {\ell-k}} \cdot N^{(d)}(s,k).\]
Hence, every column of $N^{(d)}(s,k)$ is a linear combination of the columns of $N^{(d)}(s,\ell)$, and by~\eqref{eq:Mk_basic}, the same holds for the columns of $M^{(d)}(s,k)$ where $0 \leq k \leq \ell$. As the matrix $M$ is a linear combination of these matrices, it follows that its columns lie in the space spanned by the columns of $N^{(d)}(s,\ell)$ whose dimension is at most ${\binom d \ell}$. This yields the required bound on the rank of $M$.
\end{proof}

\subsection{Generalized Kneser Graphs}

As our first application of the method of inclusion matrices, we improve the bound given in Proposition~\ref{prop:minrank_K(d,s,T)} for the generalized Kneser graphs $K(d,s,T)$ (recall Definition~\ref{def:Kneser}). We note, though, that this improvement is not essential to our application in Section~\ref{sec:separation}.

\begin{proposition}\label{prop:minrank_Kneser}
For every integers $t \leq s \leq d$, a set $T \subseteq \{0,1,\ldots,s-1\}$ of size $|T|=t$, and a prime $p > s$, \[{\minrank}_p( K(d,s,T) ) \leq {\binom d {s-t}}.\]
\end{proposition}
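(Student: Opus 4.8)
The plan is to construct a matrix $M$ over $\R$ that, when reduced modulo $p$, represents $K(d,s,T)$, and whose real rank is at most $\binom{d}{s-t}$; the claim then follows from Fact~\ref{fact:rankp_R} and the definition of minrank. The natural candidate is built from the inclusion-matrix products $M^{(d)}(s,k)$, whose entries are $\binom{|A\cap B|}{k}$ by~\eqref{eq:Mk}. Write $T=\{j_1,\ldots,j_t\}$ and consider the polynomial $P(x)=\prod_{j\in\{0,\ldots,s-1\}\setminus T}(x-j)$, a polynomial of degree $\ell := s-t$ that vanishes exactly on $\{0,\ldots,s-1\}\setminus T$. Since $P$ is integer-valued, Fact~\ref{fact:integer-valued} lets me write $P(x)=\sum_{k=0}^{\ell} a_k \binom{x}{k}$ with $a_k\in\Z$. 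Define $M=\sum_{k=0}^{\ell} a_k\, M^{(d)}(s,k)$, so that $M_{A,B}=P(|A\cap B|)$ for all $s$-subsets $A,B$ of $[d]$.

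Next I would check that $M$ (mod $p$) represents $K(d,s,T)$. For distinct non-adjacent vertices $A,B$ we have $|A\cap B|\in\{0,\ldots,s-1\}\setminus T$, so $P(|A\cap B|)=0$ over $\Z$, hence the entry is $0$ mod $p$ as required. For the diagonal, $M_{A,A}=P(|A|)=P(s)=\prod_{j\in\{0,\ldots,s-1\}\setminus T}(s-j)$, a product of $\ell$ positive integers each at most $s<p$; therefore $P(s)\not\equiv 0\pmod p$, which is exactly where the hypothesis $p>s$ is used. So the reduced matrix $M'=M\pmod p$ represents $K(d,s,T)$ over $\Fset_p$.

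Finally I would bound the rank. By Lemma~\ref{lemma:rank_incidence}, $\rank_\R(M)\le\binom{d}{\ell}=\binom{d}{s-t}$, and by Fact~\ref{fact:rankp_R}, $\rank_p(M')\le\rank_\R(M)\le\binom{d}{s-t}$. Hence $\minrank_p(K(d,s,T))\le\binom{d}{s-t}$. I do not anticipate a serious obstacle here: the argument is a clean assembly of the three preliminary facts, and the only point needing a moment's care is confirming $P(s)\not\equiv 0\pmod p$ (equivalently, that none of the factors $s-j$ is divisible by $p$), which follows from $0<s-j\le s<p$. One could also remark that this indeed improves Proposition~\ref{prop:minrank_K(d,s,T)}, since $\binom{d}{s-t}\le\sum_{i=0}^{s-t}\binom{d}{i}$.
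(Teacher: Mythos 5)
Your proposal is correct and matches the paper's proof essentially step for step: same polynomial $m(x)=\prod_{j\in\{0,\ldots,s-1\}\setminus T}(x-j)$, same expansion in the $\binom{x}{k}$ basis via Fact~\ref{fact:integer-valued}, same matrix $M=\sum_k a_k\,M^{(d)}(s,k)$, and the same combination of Fact~\ref{fact:rankp_R} and Lemma~\ref{lemma:rank_incidence} to conclude. Your explicit check that $m(s)=\prod(s-j)$ has no factor divisible by $p$ is slightly more detailed than the paper's terse ``using the assumption $p>s$,'' but it is the same argument.
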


\begin{proof}
Consider the polynomial $m \in \R [x]$ defined by
\[m(x) = \prod_{j \in \{0,1,\ldots,s-1\} \setminus T}{(x-j)}.\]
Notice that $m$ is an integer-valued polynomial of degree $s-t$. By Fact~\ref{fact:integer-valued}, one can write
\begin{eqnarray}\label{eq:q}
m(x) = \sum_{k=0}^{s-t}{a_k \cdot {\binom x k}}
\end{eqnarray}
for integer coefficients $a_0, \ldots, a_{s-t}$.
Using the notation in~\eqref{eq:Mk_basic}, define the ${\binom d s} \times {\binom d s}$ integer matrix
\[M = \sum_{k=0}^{s-t}{a_k \cdot M^{(d)}(s,k)},\]
and let $M' = M~(\mod~p)$ for a prime $p > s$.
We claim that $M'$ represents the graph $K(d,s,T)$ over $\Fset_p$.
Indeed, using~\eqref{eq:Mk} and~\eqref{eq:q}, for every two $s$-subsets $A$ and $B$ of $[d]$ we have
\[M_{A,B} = \sum_{k=0}^{s-t}{a_k \cdot (M^{(d)}(s,k))_{A,B}} = \sum_{k=0}^{s-t}{a_k \cdot {\binom {|A \cap B|} k}} = m(|A \cap B|).\]
Every two distinct vertices $A,B$ non-adjacent in $K(d,s,T)$ satisfy $|A \cap B| \in \{0,1,\ldots,s-1\} \setminus T$, and thus $M_{A,B} = m(|A \cap B|) = 0$, and in particular $M'_{A,B} = 0$. On the other hand, every vertex $A$ satisfies $|A| = s$, and thus $M_{A,A} = m(s)$, so using the assumption $p>s$ it follows that $M'_{A,A} \neq 0$.
Finally, we obtain that
\[{\minrank}_p( K(d,s,T) ) \leq {\rank}_p(M') \leq {\rank}_\R (M) \leq {\binom d {s-t}},\]
where the second and third inequalities follow from Fact~\ref{fact:rankp_R} and Lemma~\ref{lemma:rank_incidence} respectively, and we are done.
\end{proof}

We next consider a modular variant of the generalized Kneser graphs, defined as follows.

\begin{definition}\label{def:Kneser=q}
For integers $t \leq s \leq d$ and $q$, the graph $K_q(d,s,t)$ is defined as $K(d,s,T)$ where $T = \{ i \in \{0,1, \ldots,s-1\} \mid i = t ~(\mod~q)\}$.
\end{definition}

The following proposition provides an upper bound on the minrank over $\Fset_p$ of the graph $K_q(d,s,t)$, where $p$ is a prime and $q$ is a power of $p$.
This bound is crucial for the construction given in Theorem~\ref{thm:Intro1}, which separates for every {\em fixed} prime $p$ the $\vartheta$-function of a graph from its minrank over $\Fset_p$.

\begin{proposition}\label{prop:Kneser=q}
For every prime $p$, a prime power $q = p^\ell$, and integers $t \leq s \leq d$ such that $q \leq s+1$ and  $s = t ~(\mod~q)$,
\[{\minrank}_p(K_q(d,s,t)) \leq {\binom d {q-1}}.\]
\end{proposition}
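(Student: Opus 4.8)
The plan is to follow the template of Proposition~\ref{prop:minrank_Kneser}, but to replace the single integer-valued polynomial $m(x)=\prod_{j\in\{0,\dots,s-1\}\setminus T}(x-j)$ (which would have degree $s-|T|$, too large here) by a low-degree polynomial that detects the congruence class of $|A\cap B|$ modulo $q=p^\ell$. Concretely, I would use the binomial-coefficient polynomial $m(x)=\binom{x-t}{q-1}$, or a closely related shift, and argue that over $\Fset_p$ it vanishes exactly when $x\equiv t\pmod q$ is \emph{false}, while being nonzero when $x\equiv t\pmod q$. The key number-theoretic input is Lucas' theorem: for a nonnegative integer $N$ written in base $p$, $\binom{N}{q-1}=\binom{N}{p^\ell-1}\not\equiv 0\pmod p$ if and only if the bottom $\ell$ base-$p$ digits of $N$ are all $p-1$, i.e. if and only if $N\equiv q-1\equiv -1\pmod q$. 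Thus $\binom{x-t}{q-1}\not\equiv 0\pmod p$ iff $x-t\equiv -1\pmod q$; to land on the condition $x\equiv t\pmod q$ instead I would instead take $m(x)=\binom{x-t+q-1}{q-1}=\binom{x-t-1}{q-1}$ up to the usual identity, choosing the shift so that nonvanishing mod $p$ corresponds precisely to $x\equiv t\pmod q$. (I will pin down the exact shift by a small explicit check with Lucas' theorem; this is the one routine computation.)

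Granting the right choice of $m$, the rest is a direct transcription. First, $m$ is an integer-valued polynomial of degree $q-1$, so by Fact~\ref{fact:integer-valued} I can write $m(x)=\sum_{k=0}^{q-1}a_k\binom{x}{k}$ with integer $a_k$. Then I define the ${\binom d s}\times{\binom d s}$ integer matrix $M=\sum_{k=0}^{q-1}a_k\,M^{(d)}(s,k)$ and set $M'=M\bmod p$. By~\eqref{eq:Mk} and linearity, $M_{A,B}=m(|A\cap B|)$ for all $s$-subsets $A,B$. For distinct non-adjacent vertices $A,B$ of $K_q(d,s,t)$ we have $|A\cap B|\in\{0,\dots,s-1\}$ with $|A\cap B|\not\equiv t\pmod q$, so $M'_{A,B}=m(|A\cap B|)\equiv 0\pmod p$. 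For the diagonal, $|A|=s$ and the hypothesis $s\equiv t\pmod q$ gives $M'_{A,A}=m(s)\not\equiv 0\pmod p$ by the Lucas-theorem computation; here the hypothesis $q\le s+1$ is what guarantees the shifted argument is a genuine nonnegative integer so that Lucas applies cleanly (and, as in Proposition~\ref{prop:minrank_Kneser}, keeps the relevant arguments in the range where the reduction mod $p$ does not accidentally kill the diagonal). Hence $M'$ represents $K_q(d,s,t)$ over $\Fset_p$.

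Finally I conclude exactly as in Proposition~\ref{prop:minrank_Kneser}:
\[
{\minrank}_p(K_q(d,s,t))\le {\rank}_p(M')\le {\rank}_\R(M)\le \binom{d}{q-1},
\]
where the middle inequality is Fact~\ref{fact:rankp_R} and the last is Lemma~\ref{lemma:rank_incidence} applied with $\ell:=q-1$ (note $q-1\le s$ since $q\le s+1$). The only real obstacle is getting the polynomial $m$ and its shift exactly right so that ``$m(N)\not\equiv 0\pmod p$'' matches ``$N\equiv t\pmod q$'' on the relevant range of $N$; once Lucas' theorem is invoked this is a short check, and everything else is the same linear-algebra machinery already developed for Proposition~\ref{prop:minrank_Kneser}. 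I also want to double-check the edge cases $\ell=1$ (so $q=p$, where $m(x)$ is linear in $\binom{x}{k}$'s up to degree $p-1$) and the boundary $q=s+1$, but I expect no surprises there.
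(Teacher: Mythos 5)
Your plan reproduces the paper's proof almost verbatim: the paper also takes $m(x)=\binom{x-t-1}{q-1}$, expands it via Fact~\ref{fact:integer-valued} as $\sum_{k=0}^{q-1}a_k\binom{x}{k}$, forms $M=\sum_k a_k M^{(d)}(s,k)$, checks that $M'=M\bmod p$ represents $K_q(d,s,t)$, and finishes with Fact~\ref{fact:rankp_R} and Lemma~\ref{lemma:rank_incidence}. Where the paper invokes the statement (quoted as a fact from Babai--Frankl) that for a prime power $q=p^\ell$ and \emph{any integer} $r$, $p\mid\binom{r-1}{q-1}$ iff $q\nmid r$, you instead derive the same thing from Lucas' theorem; that is an equivalent route and works fine.

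Two small corrections to your reasoning, though. First, the hypothesis $q\le s+1$ plays no role in making the binomial argument nonnegative: $|A\cap B|-t-1$ is negative whenever $|A\cap B|\le t$, which certainly occurs when $t\ge 1$, and $q\le s+1$ does nothing to prevent this. Its only job (which you also, correctly, identify at the end) is to ensure $q-1\le s$ so that Lemma~\ref{lemma:rank_incidence} applies. Consequently, to make your Lucas-based justification airtight you must handle negative arguments of $\binom{\cdot}{q-1}$; either use the identity $\binom{-n}{k}=(-1)^k\binom{n+k-1}{k}$ to reduce to nonnegative arguments before applying Lucas, or simply appeal to the form of the fact the paper uses, which is already stated for all integers $r$. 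Second, $\binom{x-t+q-1}{q-1}$ and $\binom{x-t-1}{q-1}$ are not equal as polynomials; they agree modulo $p$ on integers (by Lucas/the periodicity of the divisibility criterion), but you cannot substitute one for the other as the integer-valued polynomial being expanded in the $\binom{x}{k}$ basis. You should commit to $m(x)=\binom{x-t-1}{q-1}$ from the outset, which is the polynomial that makes the diagonal and off-diagonal checks come out exactly as needed.
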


\begin{proof}
Let $p$ be a prime and let $q = p^\ell$ be a prime power.
Consider the polynomial $m \in \R[x]$ defined by $m(x) = {\binom {x-t-1} {q-1}}$.
By Fact~\ref{fact:integer-valued}, $m$ is an integer-valued polynomial of degree $q-1$, which can be written as
\begin{eqnarray}\label{eq:m}
m(x) = \sum_{k=0}^{q-1}{a_k \cdot {\binom x k}}
\end{eqnarray}
for integer coefficients $a_0, \ldots, a_{q-1}$.
Using the notation in~\eqref{eq:Mk_basic}, define the ${\binom d s} \times {\binom d s}$ integer matrix
\[M = \sum_{k=0}^{q-1}{a_k \cdot M^{(d)}(s,k)},\]
and let $M' = M~(\mod~p)$.
We claim that $M'$ represents the graph $K_q(d,s,T)$ over $\Fset_p$.
Indeed, using~\eqref{eq:Mk} and~\eqref{eq:m}, for every two $s$-subsets $A$ and $B$ of $[d]$ we have
\[M_{A,B} = \sum_{k=0}^{q-1}{a_k \cdot (M^{(d)}(s,k))_{A,B}} = \sum_{k=0}^{q-1}{a_k \cdot {\binom {|A \cap B|} k}} = m(|A \cap B|).\]
To complete the argument, we need the following fact (see, e.g.,~\cite[Proposition~5.31]{BabaiF92}).
\begin{fact}\label{fact:binom}
For every prime $p$, a prime power $q = p^\ell$ and an integer $r$, $p$ divides ${\binom {r-1} {q-1}}$ if and only if $q$ does not divide $r$.
\end{fact}
\noindent
By Fact~\ref{fact:binom} and the definition of $m$, $M'_{A,B} = 0$ if and only if $|A \cap B| \neq t ~(\mod~q)$.
Every two distinct vertices $A,B$ non-adjacent in $K_q(d,s,t)$ satisfy $|A \cap B| \neq t ~(\mod~q)$, and thus $M'_{A,B} = 0$.
On the other hand, every vertex $A$ satisfies $|A| = s$, so using the assumption $s = t ~(\mod~q)$, it follows that $M'_{A,A} \neq 0$.
Finally, we obtain that
\[{\minrank}_p( K_q(d,s,t) ) \leq {\rank}_p(M') \leq {\rank}_\R (M) \leq {\binom d {q-1}},\]
where the second inequality follows from Fact~\ref{fact:rankp_R} and the third follows from Lemma~\ref{lemma:rank_incidence} using $q \leq s+1$, so we are done.
\end{proof}

\section{Separations between Minrank and Other Graph Parameters}\label{sec:separation}

In this section we prove Theorems~\ref{thm:Intro1},~\ref{thm:Intro2}, and~\ref{thm:Intro3}.
We start with the proof of Theorem~\ref{thm:Intro3}, which claims the existence of $n$-vertex graphs whose minrank, over any sufficiently large prime order field, is polynomial in $n$ while their complement is vector $3$-colorable.
The proof is based on instances of the generalized Kneser graphs, in which pairs of sets are adjacent if their intersection size is small. Such graphs were used in~\cite{KargerMS98} to provide a similar separation between the vector chromatic number and the chromatic number.

\begin{proof}[ of Theorem~\ref{thm:Intro3}]
For a sufficiently large integer $t$ define $d = 8t$, $s = 4t$, and $T = \{0,1,\ldots,t\}$.
Let $G$ be the complement of the graph $K(d,s,T)$ given in Definition~\ref{def:Kneser}, and note that the number $n$ of its vertices satisfies $n = {\binom {8t} {4t}} = 2^{(1-o(1))d}$.
By Item~\ref{itm:chi_v} of Lemma~\ref{lemma:chi_vK}, $\chi_v(\overline{G}) \leq \frac{d(s-t)}{s^2-dt} = 3$.
Apply Proposition~\ref{prop:minrank_Kneser} to get that for any prime $p > s = \Theta( \log n)$, we have
\[{\minrank}_p(\overline{G}) = {\minrank}_p(K(d,s,T)) \leq {\binom {d} {s-|T|}} = {\binom {8t} {3t-1}}.\]
By Lemma~\ref{lemma:minrank_comp}, this implies that
\[ {\minrank}_p(G) \geq \frac{n}{{\binom {8t} {3t-1}}} \geq n^{1-H(3/8)-o(1)} \geq n^{0.0455}, \]
where $H$ stands for the binary entropy function. This completes the proof.
\end{proof}

We turn to prove Theorem~\ref{thm:Intro1}, which claims that for every fixed prime $p$ there exist $n$-vertex graphs with constant $\vartheta$-function and minrank over $\Fset_p$ polynomial in $n$. Here we use the modular variant of the generalized Kneser graphs considered in Proposition~\ref{prop:Kneser=q} (recall Definition~\ref{def:Kneser=q}). For $p=2$, our graphs are related to a construction of~\cite{AlonP91}.

\begin{proof}[ of Theorem~\ref{thm:Intro1}]
We first prove the theorem for $p=2$.
For a sufficiently large integer $\ell$, let $d = 2^\ell$, $q = \frac{d}{4}$, $t = \frac{d}{8}$, and $s = t+q = \frac{3}{8} \cdot d$.
Let $G$ be the complement of the graph $K_q(d,s,t)$ given in Definition~\ref{def:Kneser=q}, and let $n = {\binom d s}$ denote the number of its vertices.
Recall that two distinct vertices $A$ and $B$, representing $s$-subsets of $[d]$, are adjacent in $K_q(d,s,t)$ if and only if $|A \cap B| = t~(\mod~q)$.
By $0 \leq t<q$ and $s = t+q$, this condition is equivalent for distinct $A$ and $B$ to $|A \cap B| =t$, so in our setting $K_q(d,s,t) = K(d,s,\{t\})$.
Recalling that $\vartheta(G)$ is equal to the strict vector chromatic number of $\overline{G}$, by Item~\ref{itm:chi_v_s} of Lemma~\ref{lemma:chi_vK} we obtain that
\[ \vartheta(G) = \chi_v^{(s)}(K(d,s,\{t\})) \leq \frac{d(s-t)}{s^2-dt} = \frac{1/4}{(3/8)^2-1/8} = 16.\]
Now, as $q$ is a power of $2$ and $s = t ~(\mod~q)$, we can apply Proposition~\ref{prop:Kneser=q} to obtain that
\[{\minrank}_2(\overline{G}) = {\minrank}_2(K_q(d,s,t)) \leq {\binom d {q-1}} \leq 2^{H ( 1/4 ) \cdot d},\]
where $H$ stands for the binary entropy function.
By Lemma~\ref{lemma:minrank_comp}, it follows that
\[ {\minrank}_2(G) \geq \frac{n}{2^{H (1/4) \cdot d}} \geq n^{1-\frac{H(1/4)}{H(3/8)}-o(1)} \geq n^{0.1499},\]
and we are done.

The proof for a general prime $p \geq 3$ is similar. Details follow.
For a sufficiently large integer $\ell$, let $d = p^\ell$, $q = \frac{d}{p}$, $t = \frac{d}{p^2}$, and $s = t+q = \frac{p+1}{p^2} \cdot d$.
As in the case of $p=2$, let $G$ be the complement of the graph $K_q(d,s,t) = K(d,s,\{t\})$, and let $n = {\binom d s}$ denote the number of its vertices.
By Item~\ref{itm:chi_v_s} of Lemma~\ref{lemma:chi_vK} we obtain that
\[ \vartheta(G) = \chi_v^{(s)}(K(d,s,\{t\})) \leq \frac{d(s-t)}{s^2-dt} = \frac{\frac{1}{p}}{\frac{(p+1)^2}{p^4}-\frac{1}{p^2}} = \frac{p^3}{2p+1}.\]
In particular, $\vartheta(G) \leq c$ for some $c = c(p)$.
As $q$ is a power of the prime $p$ and $s = t ~(\mod~q)$, we can apply Proposition~\ref{prop:Kneser=q} to obtain that
\[{\minrank}_p(\overline{G}) = {\minrank}_p(K_q(d,s,t)) \leq {\binom d {q-1}} \leq 2^{H ( 1/p ) \cdot d}.\]
By Lemma~\ref{lemma:minrank_comp}, using $p \geq 3$ and the monotonicity of $H$ in $[0,0.5]$, it follows that
\[ {\minrank}_p(G) \geq \frac{n}{2^{H (1/p) \cdot d}} \geq n^{1-\frac{H(1/p)}{H( (p+1)/p^2)}-o(1)} \geq n^{\delta},\]
for some $\delta = \delta(p)>0$, completing the proof.
\end{proof}

Finally, we prove the following theorem that confirms Theorem~\ref{thm:Intro2}.
Here we use the generalized Kneser graphs considered in Proposition~\ref{prop:Gp}.

\begin{theorem}\label{thm:ThetaMinrankp}
For any $\delta < 0.1887$ there exists $c = c(\delta)$ such that for infinitely many integers $n$ there exists an $n$-vertex graph $G$ such that $\vartheta(G) \leq c$ and $\minrank_p(G) \geq n^{\delta}$ for some $p = \Theta(\log n)$.
\end{theorem}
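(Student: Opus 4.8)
The plan is to follow the same template used in the proof of Theorem~\ref{thm:Intro1}, but now exploiting the generalized Kneser graph $K(d,2p-1,\{p-1\})$ from Proposition~\ref{prop:Gp}, which affords a better exponent at the cost of letting $p$ grow. First I would fix a small parameter and set the dimensions: take $p$ to be a prime, set $s = 2p-1$ and $t = p-1$, and choose $d = \beta \cdot p$ for a constant $\beta$ to be optimized (a value near $\beta \approx 8$ will turn out to be the right ballpark, matching the $0.1887$ threshold). Let $G$ be the complement of $K(d,s,\{t\})$, so $n = \binom{d}{s}$, and note that since $d = \Theta(p)$ and $s = \Theta(p)$ we have $p = \Theta(\log n)$, as required.

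Next I would bound $\vartheta(G)$. Since $\vartheta(G) = \svchrom(\overline G) = \svchrom(K(d,s,\{t\}))$ and $T = \{t\}$ is a singleton, Item~\ref{itm:chi_v_s} of Lemma~\ref{lemma:chi_vK} applies as long as $s^2 > dt$, i.e. $(2p-1)^2 > \beta p (p-1)$, which holds for all large $p$ provided $\beta < 4$; more carefully, with $d = \beta p$, $s = 2p - 1$, $t = p-1$ we get $\vartheta(G) \le \frac{d(s-t)}{s^2 - dt} \to \frac{\beta}{4 - \beta}$ as $p \to \infty$, a finite constant $c = c(\delta)$ whenever $\beta < 4$. (If the target exponent forces $\beta$ close to or above $4$, one instead uses $d$ slightly below $4p$ and checks the entropy computation still clears $0.1887$; the cleanest route is to carry $\beta$ symbolically and optimize at the end.) For the minrank side, Proposition~\ref{prop:Gp} gives ${\minrank}_p(\overline G) = {\minrank}_p(K(d,2p-1,\{p-1\})) \le \sum_{i=0}^{p-1}\binom{d}{i} \le 2^{H(1/\beta) \cdot d \,(1+o(1))}$ by the standard entropy estimate for a partial sum of binomials (valid since $p-1 < d/2$ when $\beta > 2$). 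Then Lemma~\ref{lemma:minrank_comp} yields
\[
{\minrank}_p(G) \ge \frac{n}{{\minrank}_p(\overline G)} \ge \frac{\binom{d}{2p-1}}{2^{H(1/\beta) d (1+o(1))}} \ge n^{\,1 - \frac{H(1/\beta)}{H(2/\beta)} - o(1)},
\]
using $n = \binom{\beta p}{2p-1} = 2^{H(2/\beta) \beta p (1+o(1))}$.

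The remaining step is an optimization: choose $\beta$ to maximize $1 - \frac{H(1/\beta)}{H(2/\beta)}$ subject to $\beta < 4$ (so that $\vartheta$ stays bounded). A numerical check should show this supremum exceeds $0.1887$, so for any $\delta < 0.1887$ one can pick $\beta = \beta(\delta) < 4$ with $1 - \frac{H(1/\beta)}{H(2/\beta)} > \delta$, and then for all sufficiently large primes $p$ (hence infinitely many $n$) the graph $G$ on $n = \binom{\beta p}{2p-1}$ vertices satisfies $\vartheta(G) \le c(\delta)$ and ${\minrank}_p(G) \ge n^\delta$ with $p = \Theta(\log n)$. I expect the main obstacle to be purely bookkeeping rather than conceptual: one must ensure the two constraints — $\beta < 4$ for the $\vartheta$ bound and $\beta$ large enough for the entropy ratio to beat $\delta$ — are simultaneously satisfiable, and verify that rounding $d = \lfloor \beta p \rfloor$ and the $o(1)$ slack in the binomial-sum and central-binomial entropy estimates do not erode the exponent below the claimed bound; the sharp constant $0.1887$ presumably comes from evaluating the optimum of $1 - H(1/\beta)/H(2/\beta)$ right at the boundary regime $\beta \to 4$, so the delicate point is confirming the limit behaves as claimed and that $c(\delta) \to \infty$ only as $\delta \to 0.1887$.
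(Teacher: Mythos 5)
Your construction and bookkeeping match the paper's proof of this theorem: the paper takes $d = (4-\epsilon)p$, $s = 2p-1$, $t = p-1$, applies Item~\ref{itm:chi_v_s} of Lemma~\ref{lemma:chi_vK} for the $\vartheta$ bound (yielding $\vartheta(G) \leq 2(4-\epsilon)/\epsilon$), Proposition~\ref{prop:Gp} for $\minrank_p(\overline{G}) \leq 2^{H(1/(4-\epsilon))d}$, and Lemma~\ref{lemma:minrank_comp} to conclude ${\minrank}_p(G) \geq n^{1-H(1/(4-\epsilon))/H((2-\epsilon)/(4-\epsilon)) - o(1)}$, with $\epsilon \to 0$ delivering any $\delta < 1 - H(1/4) \approx 0.1887$ --- exactly your $\beta \to 4^-$ limit. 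Your opening parenthetical that ``$\beta \approx 8$ will turn out to be the right ballpark'' is a slip (it contradicts your own constraint $\beta < 4$ and is not corrected until the end); the supremum of $1 - H(1/\beta)/H(2/\beta)$ over $\beta < 4$ is achieved only in the limit $\beta \to 4$, and equals $0.1887$ rather than exceeding it, which is precisely why the theorem is stated for $\delta < 0.1887$.
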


\begin{proof}
For a sufficiently large prime $p$, let $\epsilon \in (0,2)$ be a real number such that $d = (4-\epsilon) \cdot p$ is an integer, and let $s = 2p-1$ and $t=p-1$.
Let $G$ be the complement of the graph $K(d,s,\{t\})$.
Since $s^2 > dt$ we can apply Item~\ref{itm:chi_v_s} of Lemma~\ref{lemma:chi_vK} to obtain that
\begin{eqnarray*}
 \vartheta(G) &=& \chi_v^{(s)}(\overline{G}) = \chi_v^{(s)}(K(d,s,\{t\})) \leq \frac{d(s-t)}{s^2-dt} \\
 &=& \frac{(4-\epsilon) p^2}{(2p-1)^2-(4-\epsilon)p(p-1)} = \frac{(4-\epsilon)p^2}{\eps p^2 - \eps p +1} \leq \frac{(4-\eps)p^2}{\eps p^2/2} \leq \frac{2(4-\epsilon)}{\epsilon},
 \end{eqnarray*}
where in the second inequality we have used the assumption that $p$ is sufficiently large.
Now, by Proposition~\ref{prop:Gp} it follows that
\[{\minrank}_p(\overline{G})  = {\minrank}_p(K(d,s,\{t\})) \leq \sum_{i=0}^{p-1}{{\binom {d} {i}}} \leq 2^{H(1/(4-\eps)) \cdot d}.\]
Let $n$ denote the number of vertices in $G$, and notice that $n = {\binom {d} {2p-1}} = {\binom {d} {d-2p+1}}$.
Applying Lemma~\ref{lemma:minrank_comp}, we get that
\[ {\minrank}_p(G) \geq \frac{n}{2^{H(1/(4-\eps)) \cdot d}} \geq n^{1-\frac{H(1/(4-\eps))}{H((2-\epsilon)/(4-\eps))}-o(1)},\]
where $p = \Theta(d) = \Theta(\log n)$.

Finally, notice that for every $\delta < 1-H(1/4) \approx 0.1887$ one can choose a sufficiently small $\eps >0$ for which the above construction gives an $n$-vertex graph $G$ with $\vartheta(G) \leq c$ and ${\minrank}_p(G) \geq n^\delta$, where $c$ depends only on $\delta$ and $p = \Theta(\log n)$.
\end{proof}

\bibliographystyle{abbrv}
\bibliography{theta}

\end{document}